\newtheorem{theorem}{Theorem}
\newenvironment{proof}[1][Proof]{\begin{trivlist}
\item[\hskip \labelsep {\bfseries #1}]}{\end{trivlist}}
  \thanks[sfn]{Universit\'e de Lyon, INRIA, INSA-Lyon, CITI, F-69621, FRANCE}%
\thanks{Dept. of Electrical and Computer Engineering, Stevens Institute of Technology, Hoboken, New-Jersey 07030, USA}
\begin{document}
\RRNo{6598}
\makeRR   
\section{Introduction}
Energy is a scarce resource for nodes in multi-hop networks such as
Wireless Sensor Networks (WSNs) and Ad-Hoc
networks~\cite{survey:Goldsmith02}. Therefore, energy efficiency is
of paramount importance in most of their applications.

Regarding energy efficiency, there are numerous original works
addressing the problem at the routing layer, MAC layer, physical
layer or from a cross-layer point of view,
e.g.~\cite{routing:HaenggiM05,energy:Chen02,energy:Gao02,energy:Deng07,energy:CuiSG07}.
Routing strategies in multi-hop environments have a major impact on
the energy consumption of networks. Long-hop routes demand
substantial transmission power but minimize the energy cost for
reception, computation and etc. On the opposite, routes made of
shorter hops use fewer transmission power but maximize the energy
cost for reception since there is an increase in the number of hops.
M. Haenggi points out several advantages of using long-hop routing
in his articles, e.g.~\cite{con:Haenggi.M05,routing:HaenggiM05},
among which high energy efficiency is one of the most important
factors. These works reveal the importance of the transmission range
and its impact on the energy conservation but don't provide a
theoretical analysis on the optimal hop length regarding various
networking scenarios. In~\cite{energy:Chen02}, P. Chen et al. define
the optimal one-hop length for multi-hop communications that
minimizes the total energy consumption. They also analyze the
influence of channel parameters on this optimal transmission range.
The same issue is studied in~\cite{energy:Gao02} with a
\textit{Bit-Meter-per-Joule} metric where the authors study the
effects of the network topology, the node density and the
transceiver characteristics on the overall energy expenditure. This
work is improved by J.~Deng et al. in~\cite{energy:Deng07}.

Since the data transmitted is often of a timely nature, the
end-to-end transmission delay becomes a an important performance
metric. Hence, minimum energy paths and the trade-off between energy
and delay have been widely studied,
e.g.~\cite{energy:CuiSG07,energy:Cui05,energy:Chang05}. However,
unreliable links are not considered in the aforementioned works. In
fact, experiments in different environments and theoretical analyzes
in~\cite{con:Ganesan03,con:Woo03,cor:Zhao03,model:Zamalloa07,con:Gorce07}
have proved that unreliable links have a strong impact on the
performance of upper layers such as MAC and routing layer. In our
previous work~\cite{con:Gorce07}, we have shown how unreliable link
improve the connectivity of WSNs.


In~\cite{energy:Banerjee04}, S. Banerjee et al. take unreliable
links into account in their energy efficiency analysis by
introducing a link probability and the effect of link error rate.
The authors derive the minimum energy paths for a given pair of
source and destination nodes and propose the corresponding routing
algorithm. However, the energy model used in this paper includes the
transmission power only and does not consider circuitry energy
consumption at the transmitter and receiver side. In fact, such a
model leads to an unrealistic conclusion which states that the
smaller hop distance, the higher energy efficiency. As we show in
this paper, considering a constant circuitry power according
to~\cite{com:Karl05} results in completely different conclusions.
Furthermore, we propose to evaluate the effect of fading channel on
the energy efficiency.

In this work, we do not consider any specific protocol and assume
the corresponding overhead to be negligible. Depending on the
application, the energy efficiency has a different
significance~\cite{com:Karl05}. A periodic monitoring application is
assumed here where the energy spent per correctly received bit is a
crucial energy metric. Moreover, in wireless communications, the
energy cost augments with the increase of the transmission distance.
Hence, we also adopt the mean Energy Distance Ratio per bit
($\overline{EDRb}$) metric in $J/m/bit$ proposed in
\cite{energy:Gao02}. A realistic unreliable link
model~\cite{con:Gorce07} is introduced into the energy model. The
purpose of this work is to provide a lower bound on the energy
efficiency of both single and multi-hop transmissions and derive the
corresponding average transmission delay. As such, we are able to
show the theoretical trade-off between the energy efficiency and the
delay for single-hop and multi-hop transmissions. The multi-hop case
is analyzed in a homogeneous linear network. Both studies are
performed over three different channels (i.e. AWGN, Rayleigh flat
fading and Nakagami block fading channel). Theoretical results are
then validated in 2-dimensional Poisson distributed network using
simulations.

The contributions of this paper are:
\begin{itemize}
\item The close-form expressions for the optimal transmission range and for the corresponding  optimal transmission power are derived in AWGN channel, Rayleigh flat fading channel and Nakagami block fading channel employing both a comprehensive energy model and an unreliable link
model.
\item The definition of a closed form expression for the lower bound of energy efficiency of a multi-hop communication is obtained in a linear network over three types of channel and is validated by simulation in 2-dimensional Poisson networks.
\item The definition of a lower bound for the energy-delay trade-off for a linear and a Poisson network in the three types of channel
aforementioned.
\end{itemize}

This paper is organized as follows:
Section~\ref{sec:model_metric} concentrates on presenting the models
and metrics used in the paper. Section~\ref{sec:one hop} derives a
closed form expression of optimal transmission range and optimal
transmission power for one-hop transmission. In
section~\ref{sec:low_bound}, the minimum energy reliable path for
linear networks and its delay are deduced. In
section~\ref{sec:en_delay}, we focus on the optimal trade-off
between the energy consumption and the delay in linear networks.
Simulations are given and analyzed in section~\ref{sec:sim} for a
2-dimensional network. Finally, section~\ref{sec:conclu} concludes
our work.



\section{Models and Metric}
\label{sec:model_metric} In this section, the energy model, the
realistic unreliable link model, the delay model and the metric
$\overline{EDRb}$ used in this work are introduced.
\subsection{Energy consumption model}
\label{subsec:model_energy} We consider energy efficient nodes, i.e.
nodes that only listen to the transmissions intended to themselves
and that send an acknowledgment packet (ACK) to the source node
after a correct packet reception. As such, the energy consumption
for transmission of one packet $E_p$ is composed of three
parts\footnote{In this works, no coding is considered, so the energy
cost for coding/decoding is set to zero.}: the energy consumed by
the transmitter $E_{Tx}$, by the receiver $E_{Rx}$ and by the
acknowledgement packet exchange $E_{ACK}$:
\begin{equation}
\label{eq:Ep}
 E_{p} = E_{Tx}+E_{Rx}+E_{ACK}  
\end{equation}

The transmission energy model~\cite{com:Karl05} is given by:
\begin{equation}
\label{eq:Etx} E_{Tx} = T_{start}\cdot P_{start}+\frac{N_b}{R}\cdot
(P_{txElec}+\alpha_{amp} +\beta_{amp}\cdot P_t)
\end{equation}
where $P_t$ is transmission power, the other parameters are
described in Table~\ref{tab:para} and $P_{txElec}$ is considered as
constant.


\begin{table}[!t]
\centering
    \caption{Some parameters of the transceiver energy consumption\cite{com:Karl05}}
    \label{tab:para}
    \begin{center}
        \begin{tabular}{|l|l|l|}
            \hline
            Symbol& Description & Value  \\
            \hline
            $P_{start}$& Startup power&$58.7~mW$  \\
            $T_{start}$& Startup time&$446~\mu s$  \\
            $P_{txElec}$& Transmitter circuitry power &$151~mW$ \\
            $\alpha_{amp}$& Amplifier constant power &$174~mW$ \\
            $\beta_{amp}$& Amplifier proportional offset ($>1$) & $5.0$ \\
            $P_{rxElec}$& Receiver circuitry power & $279~mW$\\
            $N_b$ & Number of bits per packet & $2560$\\
            $R $& Transmission bit rate &$ 1~Mbps $\\
            $N_0$& Noise level&$-154dBm/Hz$ \\
            $f_c$ & Carrier frequency &$2.4GHz$ \\
            $G_{Tant}$ & Transmitter antenna gain & $1$\\
            $G_{Rant}$ & Receiver antenna gain & $1$\\
            $\alpha$ & Path-loss exponent & $3$\\
            $L$ & &$1$\\
            $T_{ACK}$ & ACK Duration &$5mS$\\
            \hline
        \end{tabular}
    \end{center}
\end{table}

Similarly, the energy model on the receiver side includes two parts:
the startup energy consumption, which is considered identical to the
one of the transmitter, and the circuitry cost~\cite{com:Karl05}:
\begin{align}
\label{eq:Erx}
 E_{Rx} = T_{start}\cdot P_{start}+\frac{N_b}{R}\cdot P_{rxElec}. 
\end{align}
where $P_{rxElec}$ is the circuity power of the receiver which is
considered as constant.

In the acknowledgment process, it is assumed that the ACK packet can
be successfully transmitted in a single attempt which is based on
the following facts: firstly, since ACK packets are much smaller
data packets, their link probability is greater than that of data
packet. For instance, for respectively ACK and Data packets of 80
and 320 bytes each, if the successful transmission probability of
the data packet is  $80\%$, the link probability for the ACK packet
is  $95\%$. Secondly, assuming a symmetric channel, if the data
packet experienced a good channel, the return path experiences the
same beneficial channel conditions. Hence, we can assume that only
one ACK packet is sent with high probability of success to the
source of the message.

Since the energy consumed by the transmission power $P_t$ for the
ACK packet has small proportion in its total energy, $P_t$ is
neglected in the energy expenditure model given by:
\begin{equation}
\label{eq:Eack} E_{ACK} = (P_{txElec}+P_{rxElec}+\alpha_{amp})\cdot
T_{ACK},
\end{equation}
where $T_{ACK}$ is the average time during which the transmitter
waits for an ACK packet.

The analysis of $E_p$ shows that the energy consumption can be
classified into two parts: the first part is constant, including
$T_{start}\cdot P_{start}$, $P_{txElec}$, $~\alpha_{amp}$,
$P_{rxElec}$ and $E_{ACK}$, which are independent of the
transmission range; the second part is variable and depends on the
transmission energy $P_t$ which is tightly related to the
transmission range. Accordingly, the energy model for each bit
follows:
\begin{equation}
\label{eq:Eb} E_{b} = \frac{E_p}{N_b}= E_{c}+K_1\cdot P_{t},
\end{equation}
where $E_b$, $E_c$ and $K_1\cdot P_{t}$ are respectively the total,
the constant and the variable energy consumption per bit.
Substituting \eqref{eq:Ep} \eqref{eq:Etx} \eqref{eq:Erx}
\eqref{eq:Eack} into \eqref{eq:Eb} yields:
\begin{align}
E_c = &\frac{2T_{start}\cdot P_{start}}{N_b}+ \notag\\
&(P_{txElec}+P_{rxElec}+\alpha_{amp})(\frac{1}{R}+\frac{T_{ACK}}{N_b}) \label{eq:Ec} \\
K_1 = &\frac{\beta_{amp}}{R} \label{eq:K1}.
\end{align}
For a given transmitting/receiving technology, $E_c$ and $K_1$ are
constant because all parameters in~\eqref{eq:Ec} and \eqref{eq:K1}
are fixed. Then $E_b$ becomes a function of $P_t$, i.e., $E_b(P_t)$.

\subsection{Realistic unreliable link model}
\label{subsec:model_link}  The unreliable radio link model is
defined using the packet error rate (PER)~\cite{con:Gorce07}:
\begin{equation}
\label{eq:pl} p_l(\gamma_{x,x'})= 1-PER(\gamma_{x,x'})
\end{equation}
where $PER(\gamma)$ is the PER obtained for a signal to noise ratio
(SNR) of $\gamma$. The PER depends on the transmission chain
technology (modulation, coding, diversity ... ). And $\gamma_{x,x'}$
is calculated by~\cite{com:Karl05}:
\begin{equation}
\label{eq:gamma} \gamma_{x,x'} = K_2 \cdot P_t \cdot
d_{x,x'}^{-\alpha},
\end{equation}
with
\begin{equation}
K_2 = \frac{G_{Tant}\cdot G_{Rant}\cdot \lambda^2}{(4\pi)^2 N_0\cdot
B \cdot L},
\end{equation}
where $d_{x,x'}$ is the transmission distance between node $x$ and
$x'$, $\alpha\geq2$ is the path loss exponent, $P_t$ is the
transmission power, $G_{Tant}$ and $G_{Rant}$ are the antenna gains
for the transmitter and receiver respectively, $B$ is the bandwidth
of the channel and is set to $B = R$, $\lambda$ is the wavelength
and $L\geq1$ summarizes losses through the transmitter and receiver
circuitry.

Similar to $E_b$, for a given technology, $K_2$ becomes a constant.
And $p_l(\gamma(x,x'))$ can be rewritten as a function of $d$ and
$P_t$, i.e., $p_l(d,P_t)$.

\subsection{Mean energy distance ratio per bit ($\overline{EDRb}$)}

The mean Energy Distance Ratio per bit
($\overline{EDRb}$)~\cite{energy:Gao02} in $J/bit/m$ is defined as
the energy consumption for transmitting one bit over one meter. The
mean energy consumption per bit for the successful transmission over
one hop  $\overline E_{1hop}$ including the energy needed for
retransmissions is given by:
\begin{align}
\label{eq:E1hop} \overline E_{1hop}&=E_b(P_t)\cdot
\sum_{n=1}^{\infty}n\cdot p_l(d,P_t)\cdot
(1-p_l(d,P_t))^{(n-1)}\notag
\\ &=E_b(P_t)\cdot \frac{1}{p_l(d,P_t)}
\end{align}
where $n$ is the number of retransmissions. According to its
definition, $\overline{EDRb}$ is given by:
\begin{equation}
\label{eq:EDRb}
 \overline{EDRb} = \frac{\overline E_{1hop}}{d}
  =\frac{E_b(P_t)}{d\cdot p_l(d,P_t)}=\frac{E_{c}+K_1\cdot
  P_{t}}{d\cdot p_l(d,P_t)}.
\end{equation}

\subsection{Delay model}
The average delay for a packet to be transmitted over one hop,
$D_{onehop}$, is defined as the sum of three delay components. The
first component is the queuing delay during which a packet waits for
being transmitted. The second component is the transmission delay
that is equal to $N_b/R$. The third component is $T_{ACK}$. Note
that we neglect the propagation delay because the transmission
distance between two nodes is usually short in multi-hop networks.
Without loss of generality, $D_{onehop}$ is set to be $1$ unit.
However, one-hop transmission may suffer from the delay caused by
retransmissions. According to~\eqref{eq:E1hop}, the mean delay of a
reliable one-hop transmission is:
\begin{align}
\label{eq:delay_onehop} \overline D &= D_{onehop}\times mean\
number\ of\ retransmissions \notag \\ &=\frac{1}{p_l(d,P_t)}.
\end{align}

\section{One-hop Transmission: Energy Efficiency and Delay}
\label{sec:one hop} The one-hop transmission is the building block
of a multi-hop path. In this section, we derive the optimal
transmission range and power that minimizes the energy expenditure
of the one-hop transmission by introducing three different channel
models. Optimal transmission range $d_0$ and optimal transmission
power $P_0$ are calculated according to:
\begin{align}
&\frac{\partial{\overline{EDRb}} }{\partial{P_{t}}}
=\frac{\partial}{\partial{P_{t}}}\left( \frac{E_b(P_{t})}{d\cdot
p_l(d,P_{t})}\right)\Bigg|
_{P_{t}=P_0}&=0 \label{eq:dPt_EDRb} \\
&\frac{\partial{\overline{EDRb}} }{\partial{d}}
=\frac{\partial}{\partial d}\left(\frac{E_b(P_{t})}{d\cdot
p_l(d,P_{t})}\right)\Bigg| _{d=d_0}&=0. \label{eq:dd_EDRb}
\end{align}


The optimal transmission power $P_0$ and range $d_0$ exist because
for smaller values of $d$, the transmission power $P_t$ is low in
terms of a certain link probability and the constant energy
component $E_c$ is dominating in $\overline{EDRb}$ consequently; for
higher values of $d$, the variable energy consumption $K_1\cdot P_t$
is dominating $\overline{EDRb}$ since $P_t$ increases proportionally
to $d^\alpha$ in order to reach the destination.

\subsection{Energy-optimal transmission power $P_0$}
Substituting \eqref{eq:pl} into \eqref{eq:dPt_EDRb} and
\eqref{eq:dd_EDRb} and simplifying, according to the derivation in
Appendix~\ref{sec:appendix}, we obtain:
\begin{equation}\label{eq:P0} P_0 = \frac{E_c}{K_1(\alpha -1)}.
\end{equation}
Substituting~\eqref{eq:Ec} and~\eqref{eq:K1} into~\eqref{eq:P0}
yields:
\begin{align}\label{eq:P0_close}P_0=&\frac{1}{\alpha - 1}\left(\frac{2T_{start}\cdot
P_{start}}{\frac{N_b}{R} \beta_{amp}}\right) \notag
\\&+\left(\frac{1}{\beta_{amp}}+  \frac{T_{ACK}}{\frac{N_b}{R} \beta_{amp}} \right)
\frac{P_{txElec}+P_{rx}+\alpha_{amp}}{\alpha-1},
\end{align}
where $N_b/R$ is the transmission duration.

In~\eqref{eq:P0_close}, it should be noted that $P_0$ is independent
from $p_l(\gamma)$ and consequently independent from modulation and
fading. In general, $N_b/R\gg T_{start}$. Following, the first part
of \eqref{eq:P0_close} can be neglected. On the opposite, the
characteristics of the amplifier have a strong impact on $P_0$. When
the efficiency of the amplifier is high, i.e.
$\beta_{amp}\rightarrow 1$, $P_0$ reaches its maximum value
resulting in a longer optimal transmission range $d_0$. It tallies
with the result of~\cite{routing:HaenggiM03}. It is clear that when
the environment of transmission deteriorates, namely, $\alpha$
increases, $P_0$ decreases correspondingly.

\subsection{Energy-optimal transmission range $d_0$ and its delay}

According to Appendix~\ref{sec:appendix}, $d_0$ follows:
\begin{equation}\label{eq:d0_general}
d_0 ^\alpha = \frac{p_l^\prime (\gamma) K_2 P_0}{p_l(\gamma) },~{\rm
with~}  \gamma = K_2 P_0 d_0 ^{-\alpha}
\end{equation}
\noindent where $p_l^\prime(\gamma)$ is the first derivative of
$p_l(\gamma)$. Equation \eqref{eq:d0_general} indicates that $d_0$
depends on $p_l(\gamma)$ and hence has to be analyzed according to
the type of channel and modulation as proposed next. This expression
is meaningful since it can be used to estimate the optimal node
density in a wireless netework depending on $pl(\gamma)$.

\subsubsection{AWGN channel}
The optimal transmission range in AWGN channel, which is derived in
Appendix~\ref{sec:d0g_der_awgn}, is obtained by:
\begin{equation}
\label{eq:d0_awgn} d_{0g} = \left(\frac{-0.5415\beta_m K_2 N_b E_c
\alpha}{K_1(\alpha -1)(1+\alpha N_b \text
W_{-1}\left[\frac{-e^{-\frac{1}{N_b\cdot\alpha}}}{0.1826\alpha_m N_b
\alpha}\right])}\right)^{\frac{1}{\alpha}}
\end{equation}
where $\text W_{-1}[\cdot]$ is the branch satisfying $\text W(x)<-1$
of the Lambert W function~\cite{math:Corless96}.

Substituting~\eqref{eq:P0} and~\eqref{eq:d0_awgn}
into~\eqref{eq:gamma}, the optimal SNR $\gamma_{0g}$ is given by:
\begin{equation}
\label{eq:gamma0_awgn} \gamma_{0g}=\frac{1+\alpha N_b \text
W_{-1}\left[-\frac{e^{-\frac{1}{N_b\cdot\alpha}}}{0.1826\alpha_m N_b
\alpha}\right]}{-0.5415\beta_m k N_b \alpha}.
\end{equation}
Meanwhile, the optimal BER  is obtained by \eqref{eq:gamma0_awgn}
and \eqref{eq:ber_awgn}:
\begin{equation}
\label{eq:BER0_g} BER_{0g} =0.1826\alpha_m \exp(\frac{1+\alpha N_b
\text
W_{-1}\left[\frac{-e^{-\frac{1}{N_b\cdot\alpha}}}{0.1826\alpha_m N_b
\alpha}\right]}{N_b \alpha}).
\end{equation}
Depending on $\gamma_{0g}$ and $BER_{0g}$, the receiver can decide
whether it is in the optimal communication range or not by measuring
its channel state.

The delay and the energy efficiency of the one-hop communication can
be analyzed by expressing respectively the delay $D_g$ and the
energy metric $\overline{EDRb}$ as a function of the transmission
range $d$ as detailed in Appendix~\ref{sec:d0g_der_awgn}. Hence,
substituting~\eqref{eq:ber_awgn} into~\eqref{eq:delay_onehop}, the
delay of the reliable one-hop transmission in AWGN channel as a
function of $d$ is given by:
\begin{equation}
\label{eq:delay_oh_gaussian} \overline D_g =
\left(1-0.1826\alpha_m\cdot\exp(-0.5415\beta_mK_2 P_t
d^{-\alpha})\right)^{-N_b}.
\end{equation}

Substituting~\eqref{eq:pl_gaussian} into~\eqref{eq:dPt_EDRb}, the
optimal transmission power $P_{0g}$ as a function of the
transmission distance $d$ achieving energy efficiency in AWGN
channel follows:
\begin{equation}
\label{eq:P0_d_g} P_{0g}(d) = \frac{d^\alpha+N_b d^\alpha \text
W_{-1}\left[\frac{\exp\left(-\frac{0.5415 E_c K_2 \beta_m }{K_1
d^\alpha}-\frac{1}{N_b}\right)}{-0.1826\alpha_m
N_b}\right]}{-0.5415\beta_m K_2 N_b}-\frac{E_c}{K_1}.
\end{equation}
Substituting~\eqref{eq:P0_d_g} and~\eqref{eq:ber_awgn}
into~\eqref{eq:EDRb}, $\overline{EDRb}$ as a function of $d$ in AWGN
channel is expressed by:
\begin{equation}
\label{eq:EDRb_d_g} \overline{EDRb}(d)= \frac{E_c +
K_1P_{0g}(d)}{d\left(1-BER_g(K_2 P_{0g}(d)
d^{-\alpha})\right)^{N_b}}.
\end{equation}

\begin{figure}[!t]
\centering
\includegraphics[width=0.8\textwidth]{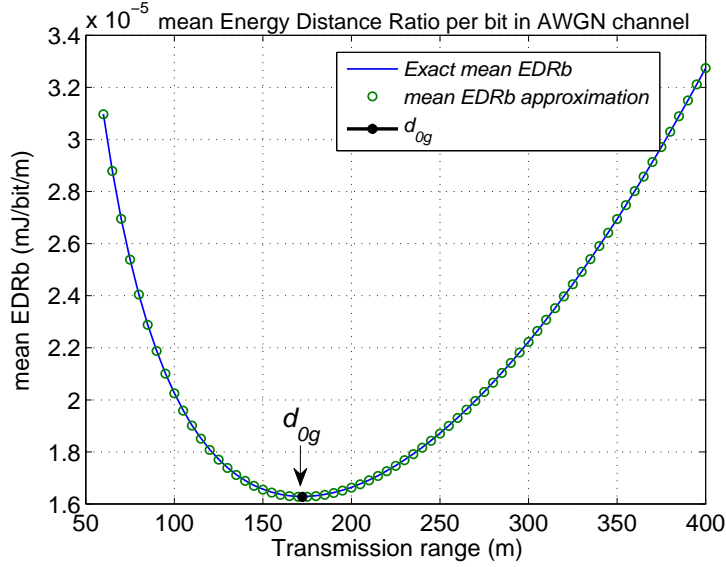}
\caption{$\overline{EDRb}$ of the one-hop transmission as a function
of the range $d$ in AWGN channel where $d_{0g}=172.31m$, $ P0 =
180.51 mW$, $\gamma_{0g}=9.34dB$, $BER_{0g}=1.37e-5$, $ p_l =
96.55\%$ and $\overline D_g = 1.04$ unit. The exact
$\overline{EDRb}$ is obtained with~\eqref{eq:ber_awgn_general} and
the approximation of $\overline{EDRb}$ is obtained with
\eqref{eq:ber_awgn}. Therefore, the approximation is feasible.
\label{fig:EDRb_gaussian}.}
\end{figure}

Fig.~\ref{fig:EDRb_gaussian} shows the variation of
$\overline{EDRb}$ with the transmission range $d$ in AWGN channel as
an example according to~\eqref{eq:EDRb_d_g}, where BPSK modulation
is adopted. The related parameters are listed in
Table~\ref{tab:para}.
It should be noted that the value of $p_l$ is close to $1$, which
shows that energy optimal links in AWGN channel are reliable.

\subsubsection{Rayleigh flat fading channel~\cite{energy:ZhangRF08}}
The optimal transmission range $d_{0f}$ in Rayleigh flat fading
channel, which is derived in Appendix~\ref{sec:d0f_der_fading}, is
obtained by:
\begin{equation}
\label{eq:d0_fading} d_{0f}=\left(\frac{2\beta_m
E_c\cdot{K_2}}{(\alpha -1) \cdot{K_1}\alpha_m (\alpha
N_b+1)}\right)^{\frac{1}{\alpha}}.
\end{equation}
The expression of $d_{0f}$ shows that it decreases with the increase
of $\alpha$ or $N_b$.

Substituting~\eqref{eq:P0} and \eqref{eq:d0_fading}
into~\eqref{eq:gamma} provides the optimal SNR in Rayleigh flat
fading channel:
\begin{equation}
\label{eq:gamma0}\bar\gamma_{0f} = (\alpha N_b
+1)\frac{\alpha_m}{2\beta_m}\approx\frac{\alpha
N_b\cdot\alpha_m}{2\beta_m} ~~ (\alpha N_b\gg1).
\end{equation}
Then substituting ~\eqref{eq:gamma0} into \eqref{eq:berf}, the
optimal BER in Rayleigh flat fading channel is:
\begin{equation}
\label{eq:BER0_f} BER_{0f} = \frac{1}{\alpha N_b +1} \approx
\frac{1}{\alpha N_b}.
\end{equation}
From a cross layer point of view, the routing layer can identify if
a node is at the optimal communication range according to the values
of $\bar \gamma_{0f}$ or $BER_{0f}$.

Similarly to the study in AWGN channel, we derive here the
expression of the delay $D_f$ and the energy metric
$\overline{EDRb}$ as a function of the transmission range $d$ which
is detailed in Appendix~\ref{sec:d0f_der_fading}.
Substituting~\eqref{eq:berf} into~\eqref{eq:delay_onehop}, the delay
of the reliable one-hop transmission in a Rayleigh flat fading
channel as a function of $d$ is given by:
\begin{equation}
\label{eq:delay_oh_fading} \overline D_f =
\left(1-\frac{\alpha_m}{2\beta_m K_2 P_t d^{-\alpha}}\right)^{-N_b}.
\end{equation}

Substituting~\eqref{eq:EDRb_fading} into~\eqref{eq:dPt_EDRb}, the
optimal transmission power $P_{0f}$ as a function of the
transmission distance $d$ achieving energy efficiency in a Rayleigh
channel follows:
\begin{align}
\label{eq:P0_d} P_{0f}(d)=&\frac{d^a(1+N_b)\alpha_m }{4K_2\beta_m}
+\frac{\sqrt{d^\alpha K_1\alpha_m(d^\alpha
K_1(1+N_b)^2\alpha_m+8E_cK_2N_b\beta_m)}}{4K_2 K_1\beta_m}
\end{align}
Hence, for a given transmission distance, the optimal transmission
power can be derived according to $P_{0f}(d)$ in an adaptive power
configuration.

Finally, $\overline{EDRb}$ as a function of $d$ is computed by
substituting ~\eqref{eq:P0_d} into \eqref{eq:EDRb}:
\begin{equation}
\label{eq:EDRb_d_f} \overline{EDRb}(d)= \frac{E_c +
K_1P_{0f}(d)}{d\left(1-\frac{\alpha_m}{2\beta_mK_2\cdot P_{0f}(d)
\cdot d^{-\alpha}}\right)^{N_b}}
\end{equation}

\begin{figure}[!t]
\centering
\includegraphics[width=0.8\textwidth]{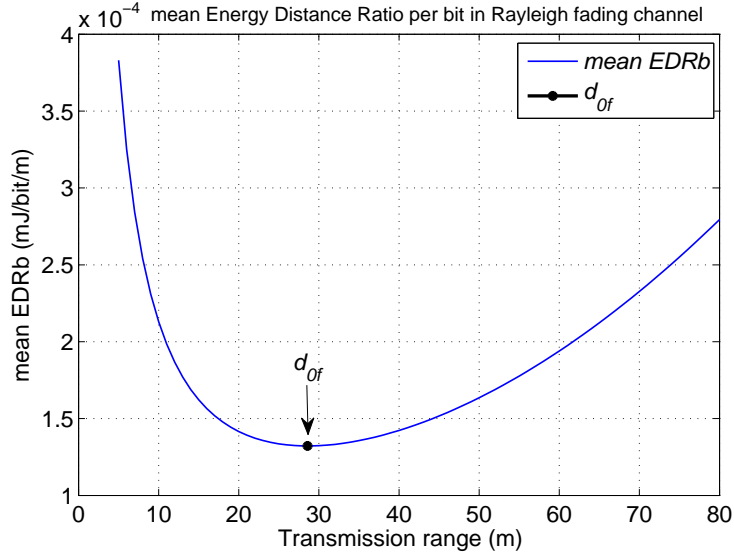}
\caption{$\overline{EDRb}$ of the one-hop transmission as a function
of the range $d$ in Rayleigh flat fading channel where
$P_0=180.51mW$, $d_{0f}=15.97m$, $\gamma_{0f}=32.83dB$,
$BER_{0f}=1.30e-4$, $p_l = 0.72$ and $\overline D_f= 1.40$.}
\label{fig:EDRb_fading}
\end{figure}
Fig.~\ref{fig:EDRb_fading} shows that $\overline{EDRb}$ varies with
$d$ according to~\eqref{eq:EDRb_d_f} in a Rayleigh flat fading
channel. The parameters related are listed in table~\ref{tab:para}.
Having $p_l = 0.7165$ shows that an energy optimal link in Rayleigh
channel is far less reliable than the link in AWGN channel. This
result claims for using unreliable links in the real deployment of
wireless network.

\subsubsection{Nakagami block fading channel}
The link model in Nakagami block fading channel, as shown
in~\eqref{eq:pl_block}, is too complex to obtain the closed form
expression of the energy optimal transmission distance $d_{0b}$.
Therefore, two scenarios are taken into consideration in the
following.

Firstly, when $m =1$ and $\alpha_m =1$ (e.g., for BPSK, BFSK and
QPSK), according to the derivation in
Appendix~\ref{sec:d0b_der_block}, the optimal transmission range
$d_{0b}$ in Nakagami block fading channel is:
\begin{equation}
d_{0b}=\left(\frac{ \beta_m K_2 E_c
}{K_1(\alpha^2-\alpha)(4.25\log_{10}(N_b)-2.2)} \right) ^{1/\alpha}.
\label{eq:d0_block}
\end{equation}
Substituting~\eqref{eq:P0} and \eqref{eq:d0_block}
into~\eqref{eq:gamma} yields the optimal signal to noise ratio in
Nakagami block fading channel:
\begin{equation}
\overline{\gamma}_{0b}=\frac{\alpha}{\beta_m}(4.25\log_{10}(N_b)-2.2)
\end{equation}

For a given transmission range, we can obtain the optimal
transmission power $P_{0b}$ in Nakagami block fading channel using
~\eqref{eq:dPt_EDRb} and~\eqref{eq:pl_block_approx}:
\begin{equation}
P_{0b}(d)=\frac{-2E_c}{K_1-\frac{\sqrt{K_1(4E_c \beta_m K_2 -2.2
d^\alpha K_1 + 4.25 d^\alpha K_1
\log_{10}(N_b))}}{\sqrt{d^\alpha(4.25\log_{10}(N_b)-2.2)}}}.
\label{eq:P0_d_b}
\end{equation}
Finally, $\overline{EDRb}$ as a function of $d$ is obtained by
substituting \eqref{eq:P0_d_b} into \eqref{eq:EDRb}:
\begin{equation}
\label{eq:EDRb_d_b} \overline{EDRb}(d)= \frac{E_c +
K_1P_{0b}(d)}{d\exp\left(\frac{-4.25\log_{10}(Nb)+2.2}{\beta_m
K_2P_{0b}(d)d^{-\alpha}}\right)^{N_b}}.
\end{equation}
Substituting~\eqref{eq:pl_block} into~\eqref{eq:delay_onehop}, we
get the delay of a reliable one-hop $\overline D_b$ in Nakagami
block fading channel:
\begin{equation}
\label{eq:delay_oh_block} \overline D_b =
\frac{1}{\int_{\gamma=0}^{\infty} (1-BER(\gamma))^{N_b} p(\gamma |
K_2P_t  d^{-\alpha})d\gamma}.
\end{equation}

For the other scenarios, the sequential quadratic programming (SQP)
method algorithm in~\cite{math:Ravindran06} is adopted to solve the
optimization problem related to the computation of the optimal
$\overline{EDRb}$.

\begin{figure}[!t]
\centering
\includegraphics[width=0.8\textwidth]{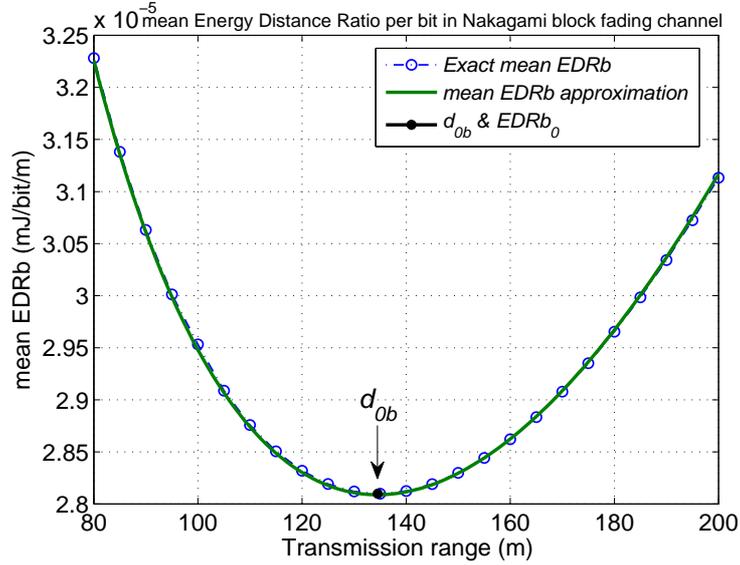}
\caption{$\overline{EDRb}$ of the one-hop transmission as a function
of the range $d$ in Nakagami-m block fading channel where $m=1$,
$\alpha_m=1$, $P_0=180.50mW$, $d_{0b}=134.16m$,
$\gamma_{0b}=12.69dB$, $BER_{0b}=6.67e-005$, $p_l = 0.72$ and
$\overline D_b= 1.40$. The exact $\overline{EDRb}$ is obtained
with~\eqref{eq:pl_block} and the approximation of $\overline{EDRb}$
is obtained with~\eqref{eq:pl_block_approx}. Therefore, this is an
suitable approximation.} \label{fig:EDRb_block}
\end{figure}
Fig.~\ref{fig:EDRb_block} shows how $\overline{EDRb}$ varies with
$d$ according to Eq.~\eqref{eq:EDRb_d_b} in  Nakagami block fading
channel using BPSK modulation. The related parameters are presented
in table~\ref{tab:para}.  Having $p_l = 0.72$ reveals that energy
optimal links in Nakagami block fading channel are even more
unreliable than those in Rayleigh flat fading channel.

From Fig.~\ref{fig:EDRb_gaussian}, Fig.~\ref{fig:EDRb_fading} and
Fig.~\ref{fig:EDRb_block}, it can be concluded that: firstly, the
optimal transmission power $P_0$ corresponding to the optimal
transmission range is the same for all channels which concises with
the result of Eq.\eqref{eq:P0}; secondly, the optimal transmission
range decreases when fading becomes stronger, namely, from AWGN,
Nakagami block fading channel to Rayleigh flat fading channel;
thirdly, $\overline{EDRb}$ increases with the enlargement of fading
, i.e., more energy has to be consumed to counteract the effect of
fading.

%
%
%
%

\subsection{Impact of some physical parameters}
This section studies the impact of some physical parameters such as
the path-loss exponent $\alpha$, the strength of fading, the
circuitry power, $N_b$, the transmission rate $R$ and the modulation
technique. For all the results provided hereafter, the values of
physical parameters that are not analyzed are given in
Table~\ref{tab:para}.

\paragraph*{Impact of fading}
\begin{figure}[!t]
    \centering
    \includegraphics[width=0.8\textwidth]{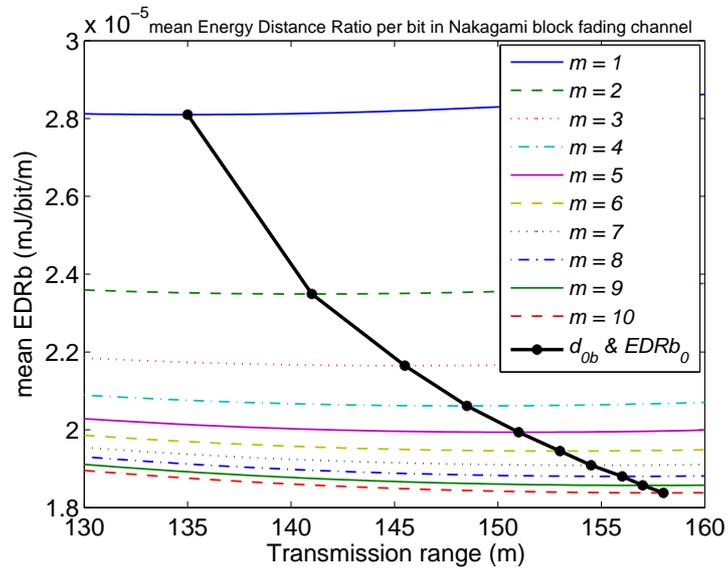}
    \caption{Impact of the strength of fading ($m$) in Nakagami block fading channel
    for the one-hop transmission energy
    performance given by $\overline{EDRb}$ as a function of $d$.}
    \label{fig:effect_m}
\end{figure}
The sequential quadratic programming (SQP) algorithm described
in~\cite{math:Ravindran06} is implemented to analyze the impact of
strength of fading on the optimal $\overline{EDRb}$ and
corresponding optimal transmission range in Nakagami block fading
channel. The results are shown in Fig.~\ref{fig:effect_m}. Similarly
to our previous analysis, the increase of the strength of fading
leads to the increase of the optimal $\overline{EDRb}$ and shortens
the optimal transmission range. In that case, more energy is
consumed to overcome the destructive effect of fading.

\paragraph*{Impact of the path loss exponent}
\begin{figure*}[!t]
    \centering
    \subfigure[AWGN channel]{
    \includegraphics[width=0.6\textwidth]{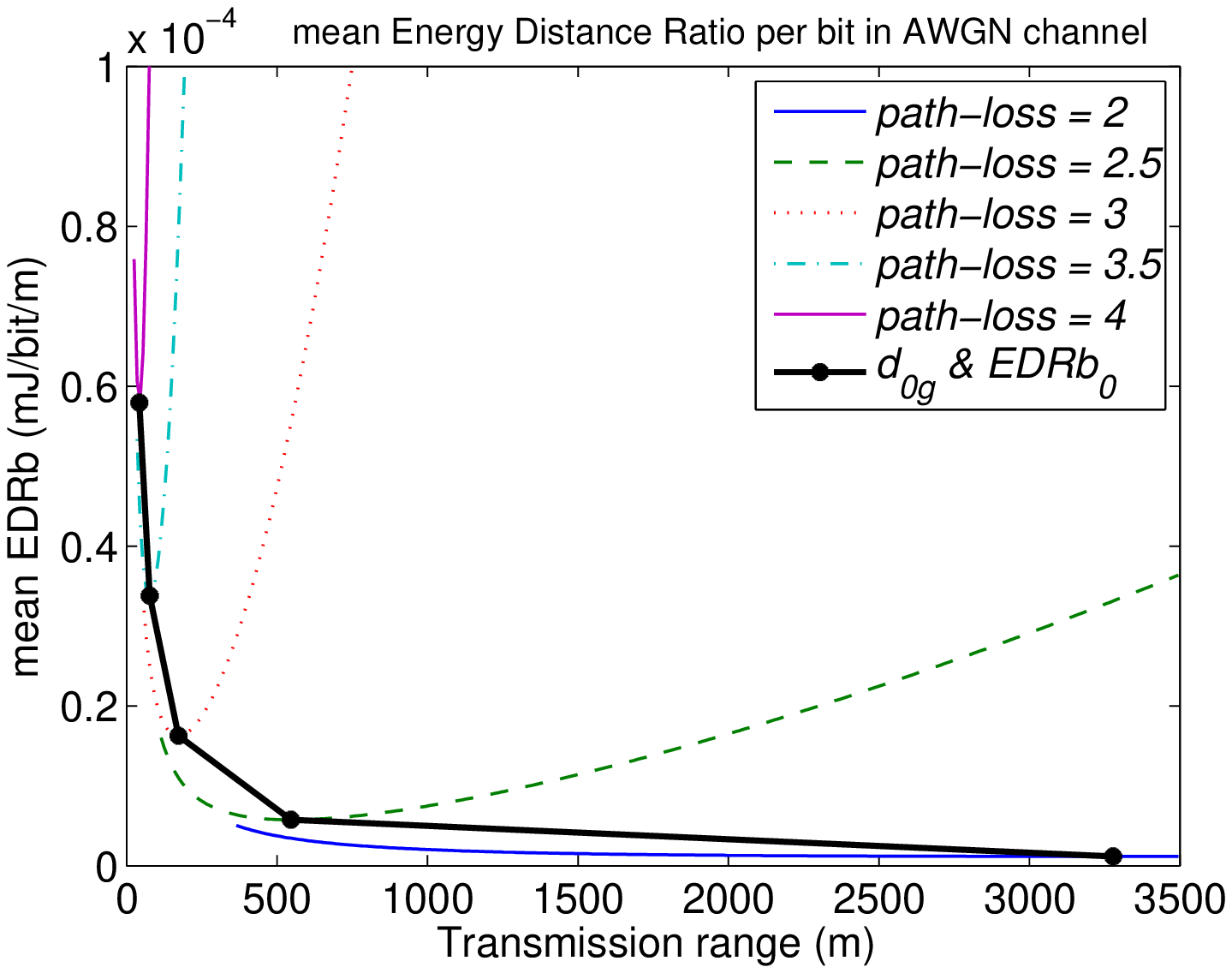}}
    \hfil
    \subfigure[Rayleigh flat fading channel]{
    \includegraphics[width=0.6\textwidth]{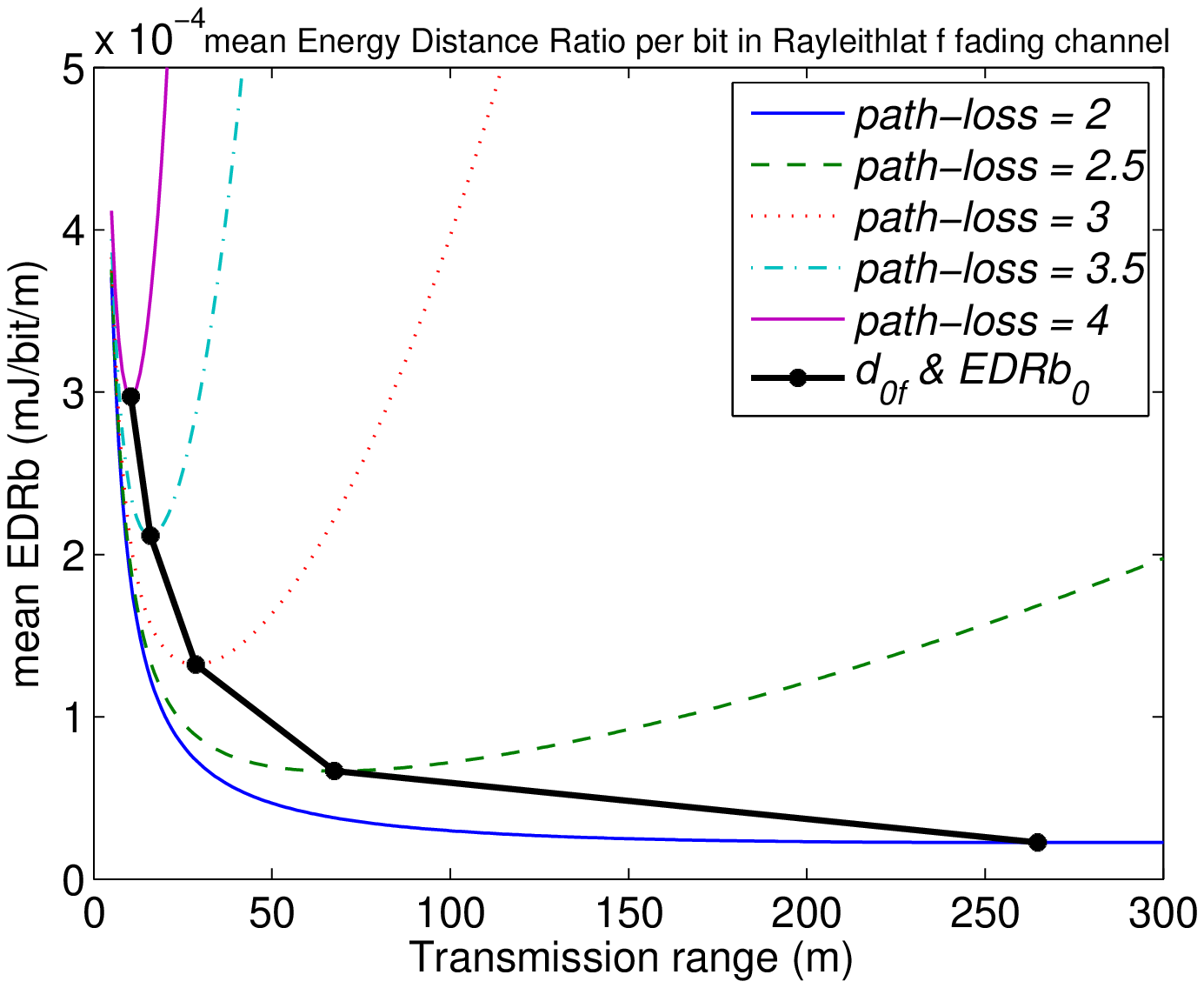}}
    \hfil
    \subfigure[Nakagami block fading channel $m=1$]{
    \includegraphics[width=0.6 \textwidth]{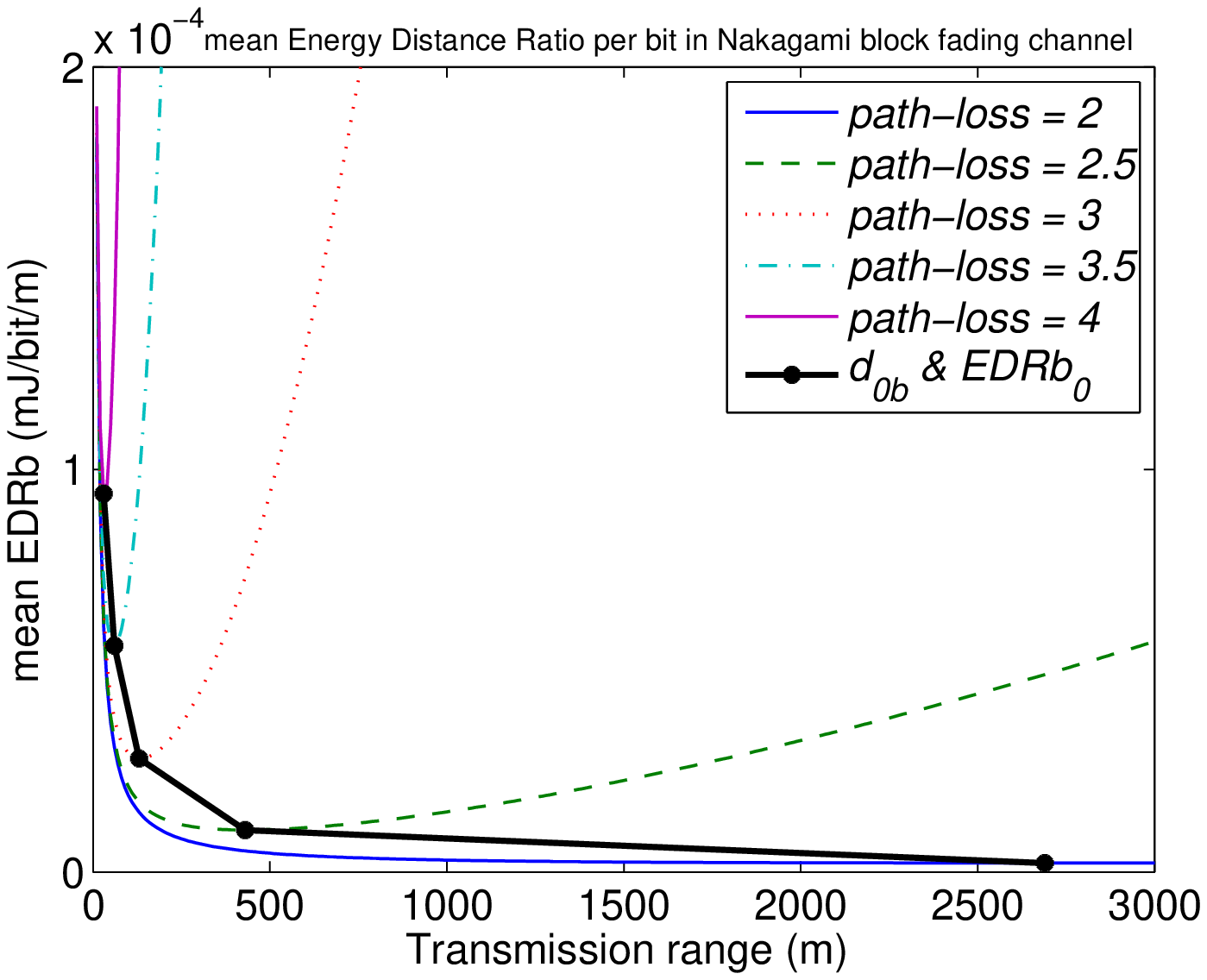}}
    \caption{Impact of the path loss exponent $\alpha$ on the one-hop transmission energy
    performance given by $\overline{EDRb}$ as a function of $d$.}
     \label{fig:effect_pathloss}
 \end{figure*}
Fig.~\ref{fig:effect_pathloss} shows that $\overline{EDRb}$ greatly
increases with the strength of the path loss, i.e., more energy is
consumed to make up for the path loss. Meanwhile, path loss shortens
the optimal transmission range which induces more hops and higher
delay for a given transmission distance.

\paragraph*{Impact of the circuitry power}
\begin{figure*}[!t]
    \centering
    \subfigure[AWGN channel]{
    \includegraphics[width=0.6\textwidth]{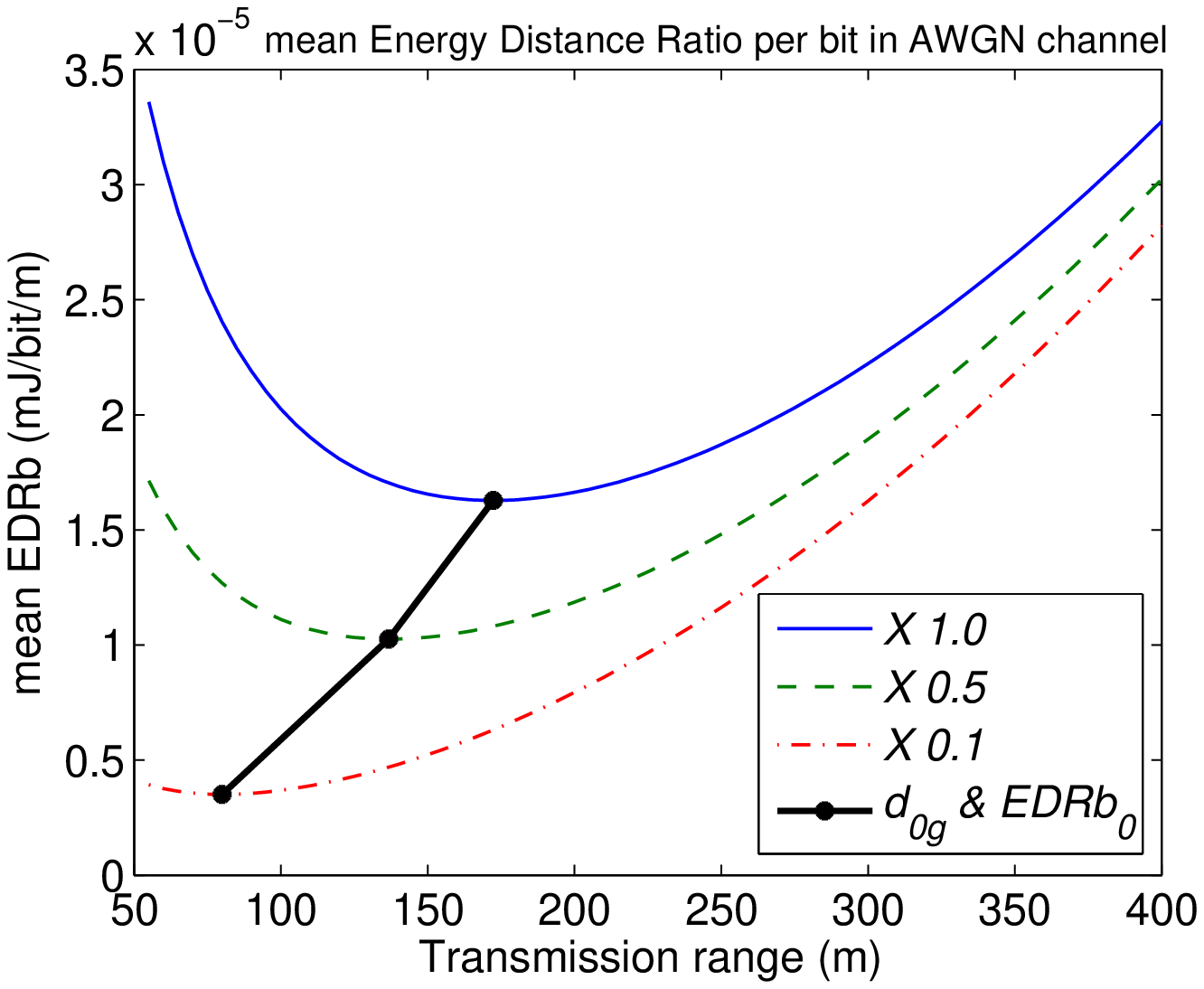}}
    \hfil
    \subfigure[Rayleigh flat fading channel]{
    \includegraphics[width=0.6\textwidth]{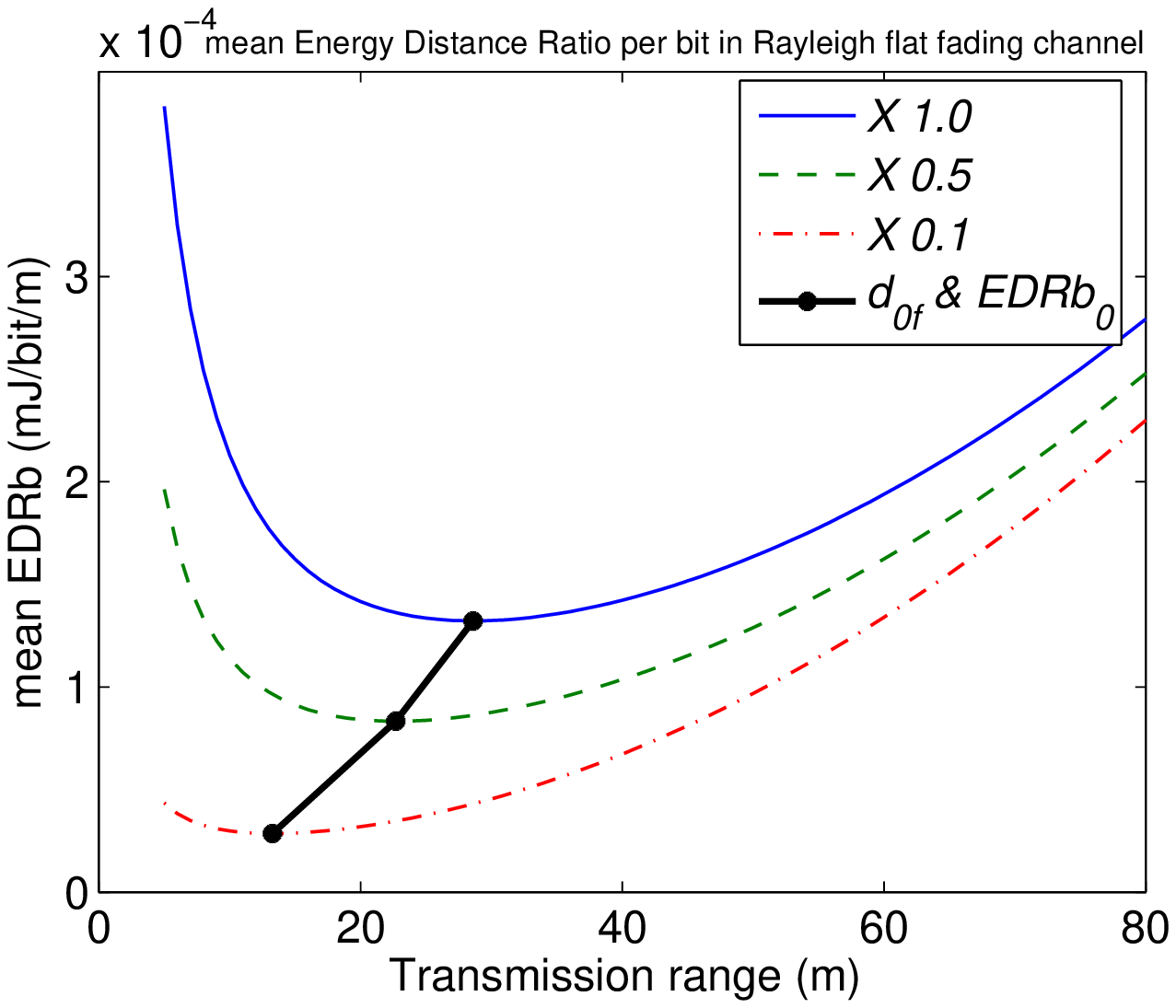}}
    \hfil
    \subfigure[Nakagami block fading channel $m=1$]{
    \includegraphics[width=0.6 \textwidth]{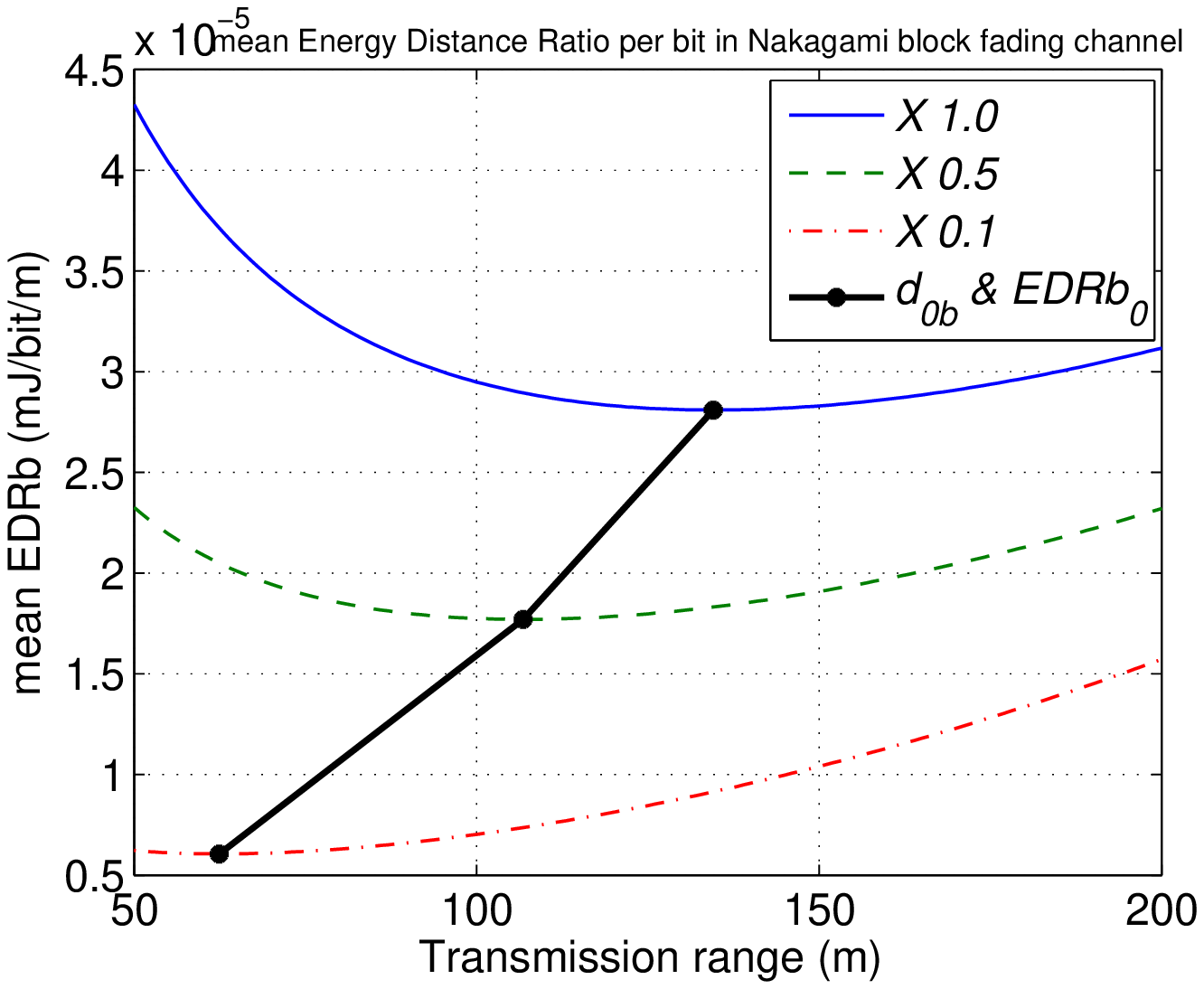}}
    \caption{Impact of the circuit power on the one-hop transmission energy
    performance given by $\overline{EDRb}$ as a function of $d$.}
     \label{fig:effect_circuit}
\end{figure*}
Fig.~\ref{fig:effect_circuit} shows the effect of circuity power on
$\overline{EDRb}$ and $d_0$, where the whole circuity powers
$P_{txElec}$, $P_{rxElec} $, $\alpha_{amp}$ and $P_{start}$ decrease
by the coefficients $0.5$ and $0.1$.  Since the reduction of
circuity powers results in the decrease of $P_0$ which leads to
shorten $d_0$. When the circuity powers are set to $0$, the shortest
hop distance has the high energy
efficiency~\cite{energy:Banerjee04}. Meanwhile, the energy
efficiency is improved with the reduction of circuity power. Hence,
the effect of circuity energy consumption should be considered in
the design of WSNs.
\paragraph*{Impact of the modulation}
\begin{figure*}[!t]
    \centering
    \subfigure[AWGN channel]{
    \includegraphics[width=0.6\textwidth]{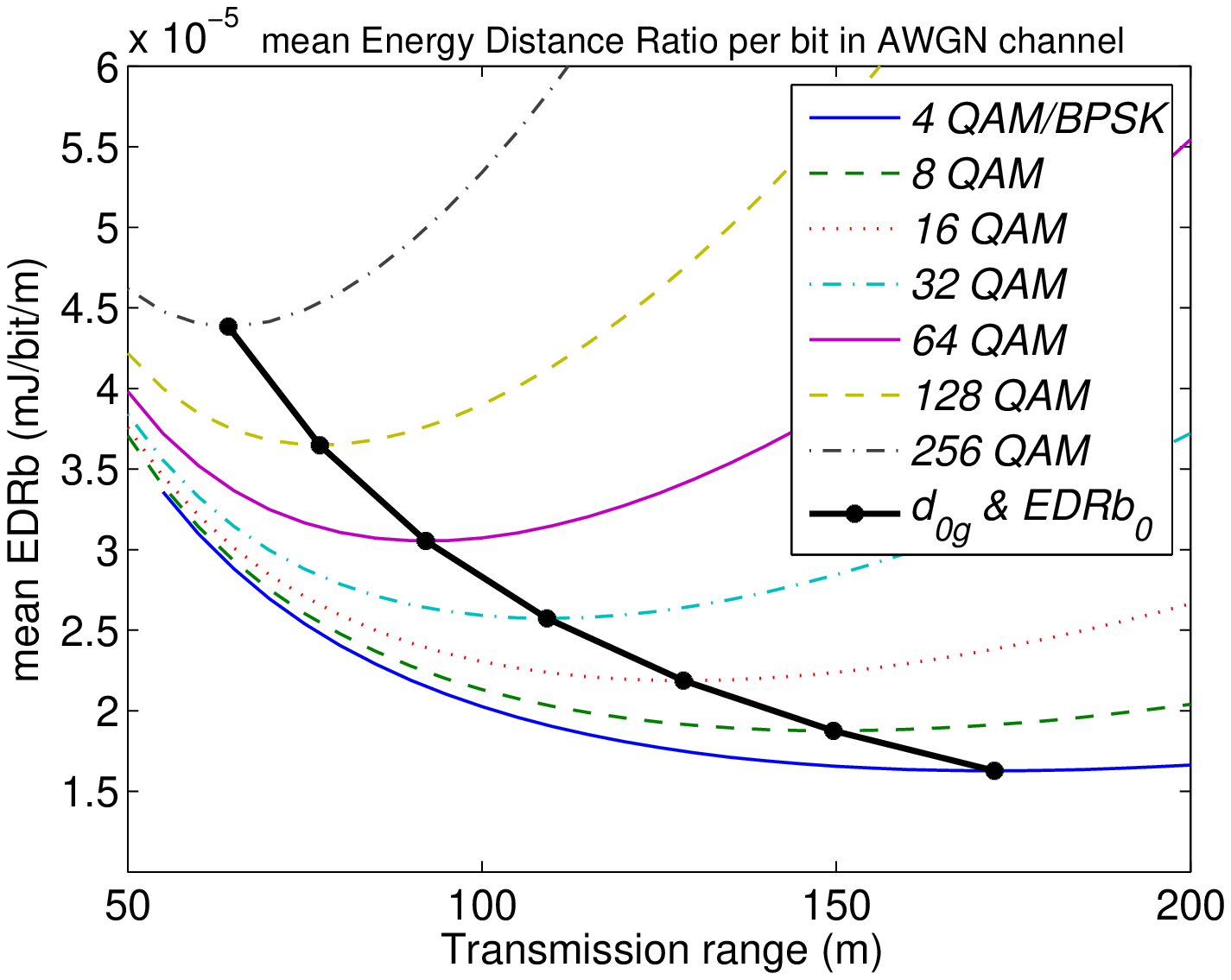}}
    \hfil
    \subfigure[Rayleigh flat fading channel]{
    \includegraphics[width=0.6\textwidth]{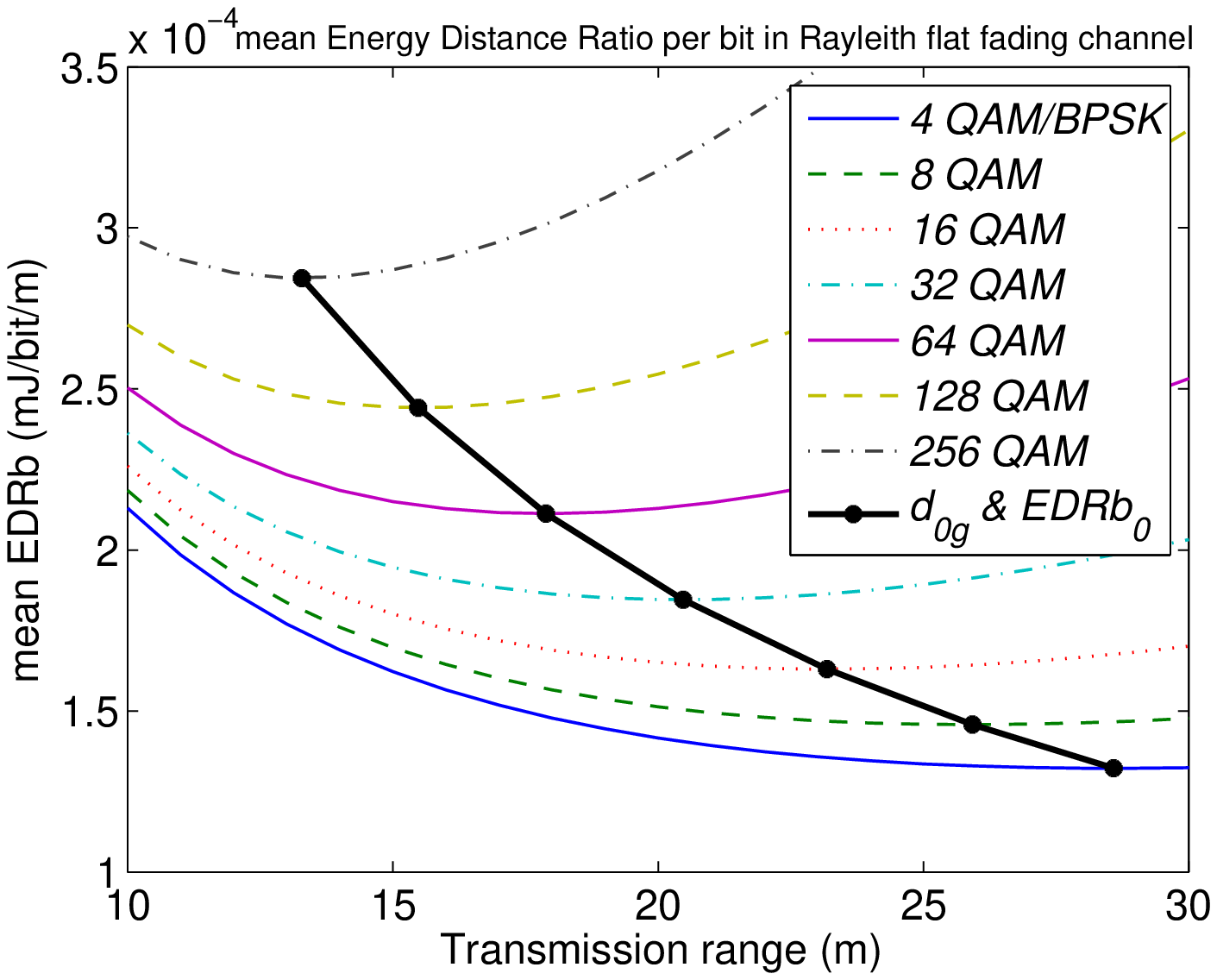}}
    \hfil
    \subfigure[Nakagami block fading channel $m=1$]{
    \includegraphics[width=0.6 \textwidth]{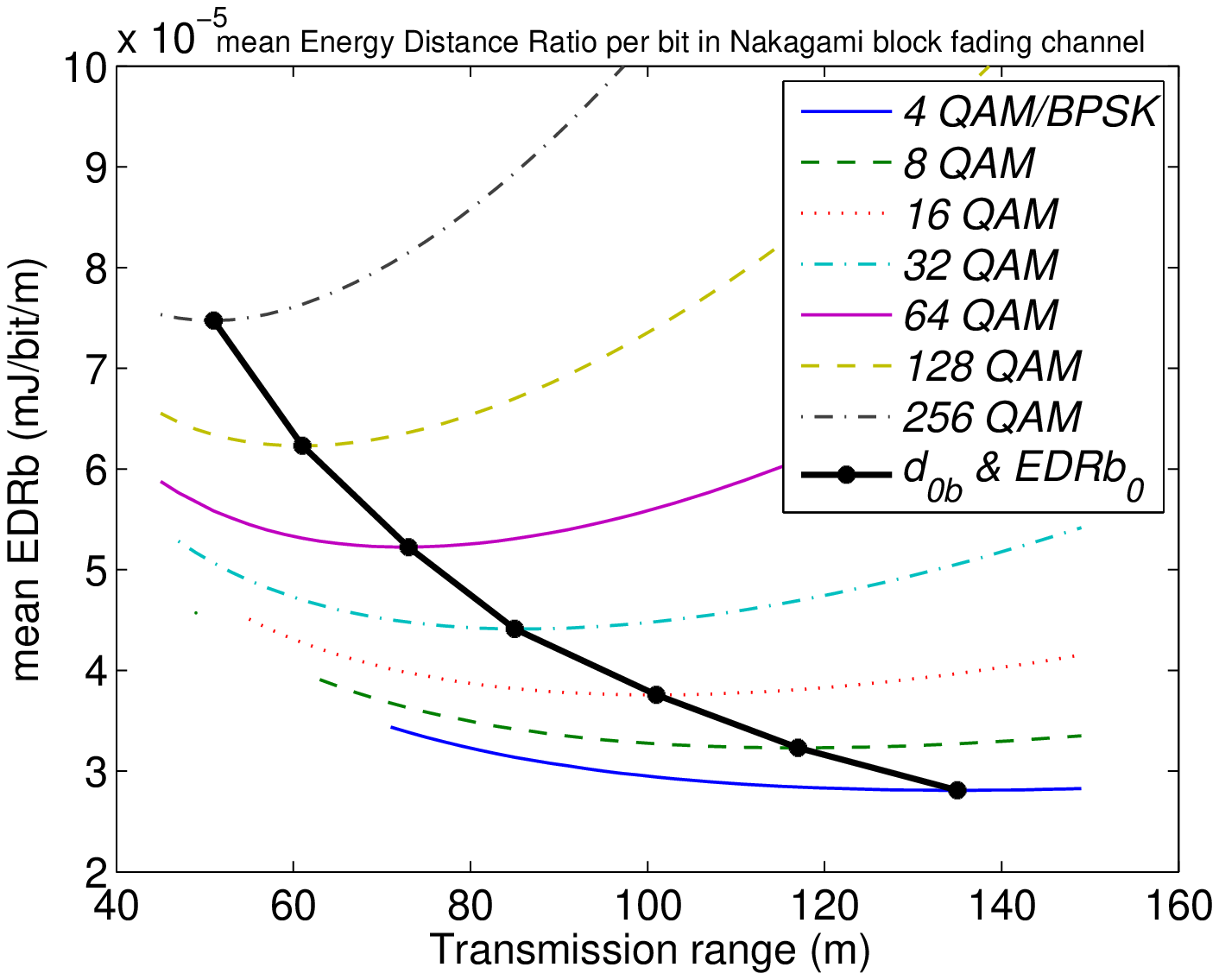}}
    \caption{Impact of the modulation on the one-hop transmission energy
    performance given by $\overline{EDRb}$ as a function of $d$.}
     \label{fig:effect_modulation}
\end{figure*}
The effect of modulation on the optimal $\overline{EDRb}$ for three
kinds of channel is shown in Fig.~\ref{fig:effect_modulation}. It
should be noted that the optimal $\overline{EDRb}$ monotonously
decreases while the optimal transmission range monotonously
increases with the decrease of the order of the modulation for the
three different channel types. 4QAM or BPSK are the most energy
efficient among the MQAM modulations which can be explained by BER.
BER increases with the order of the modulation for an identical SNR,
which leads to a reduced optimal transmission range. Due to the
reduction of the transmission range and duration, $E_c$ has a bigger
proportion in the total energy consumption, which results in the
increase of $\overline{EDRb}$.

\paragraph*{Impact of the packet size}
\begin{figure*}[!t]
\centering \subfigure[AWGN channel]{
\includegraphics[width=0.6\textwidth]{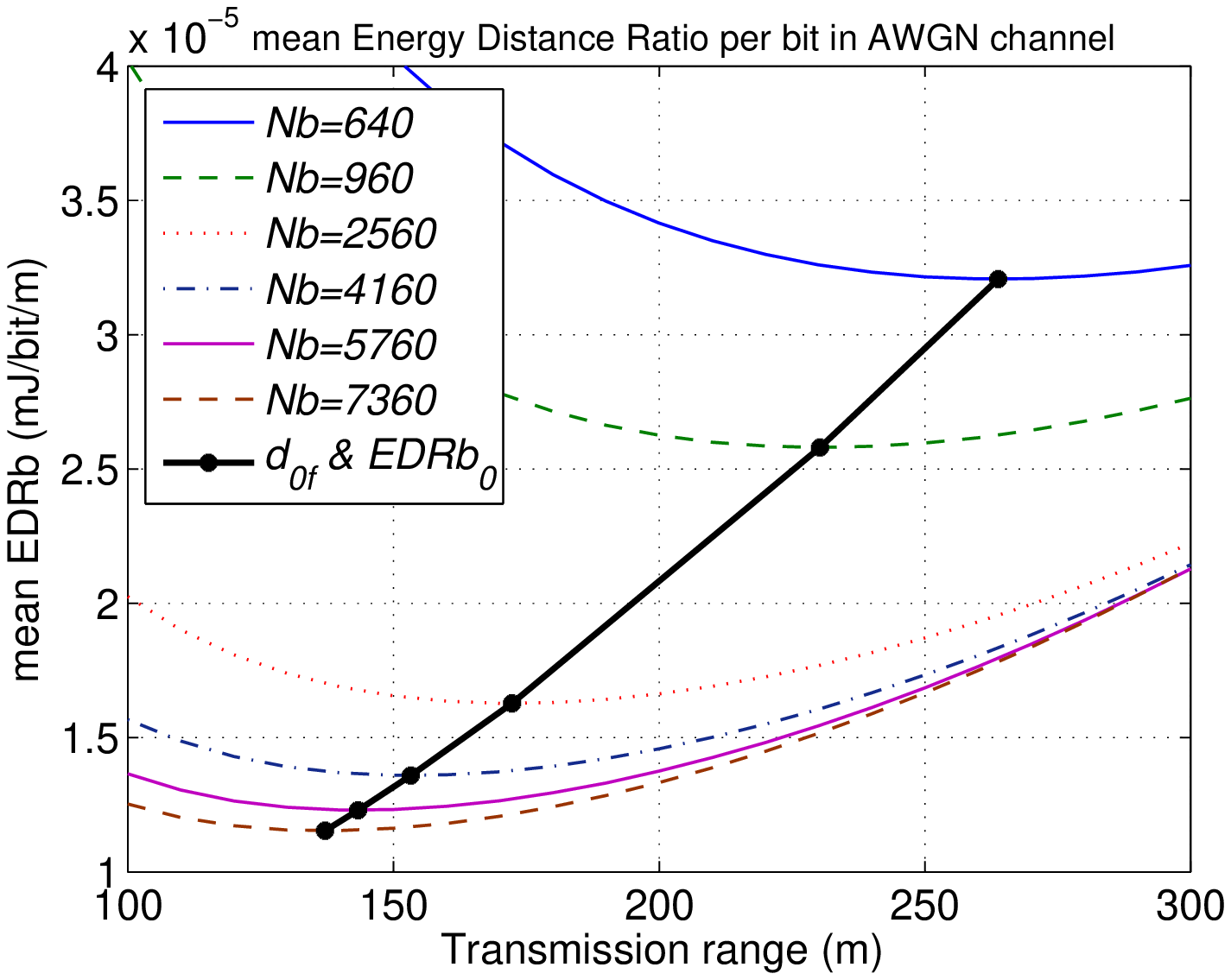}}
\hfil \subfigure[Rayleigh flat fading channel]{
\includegraphics[width=0.6\textwidth]{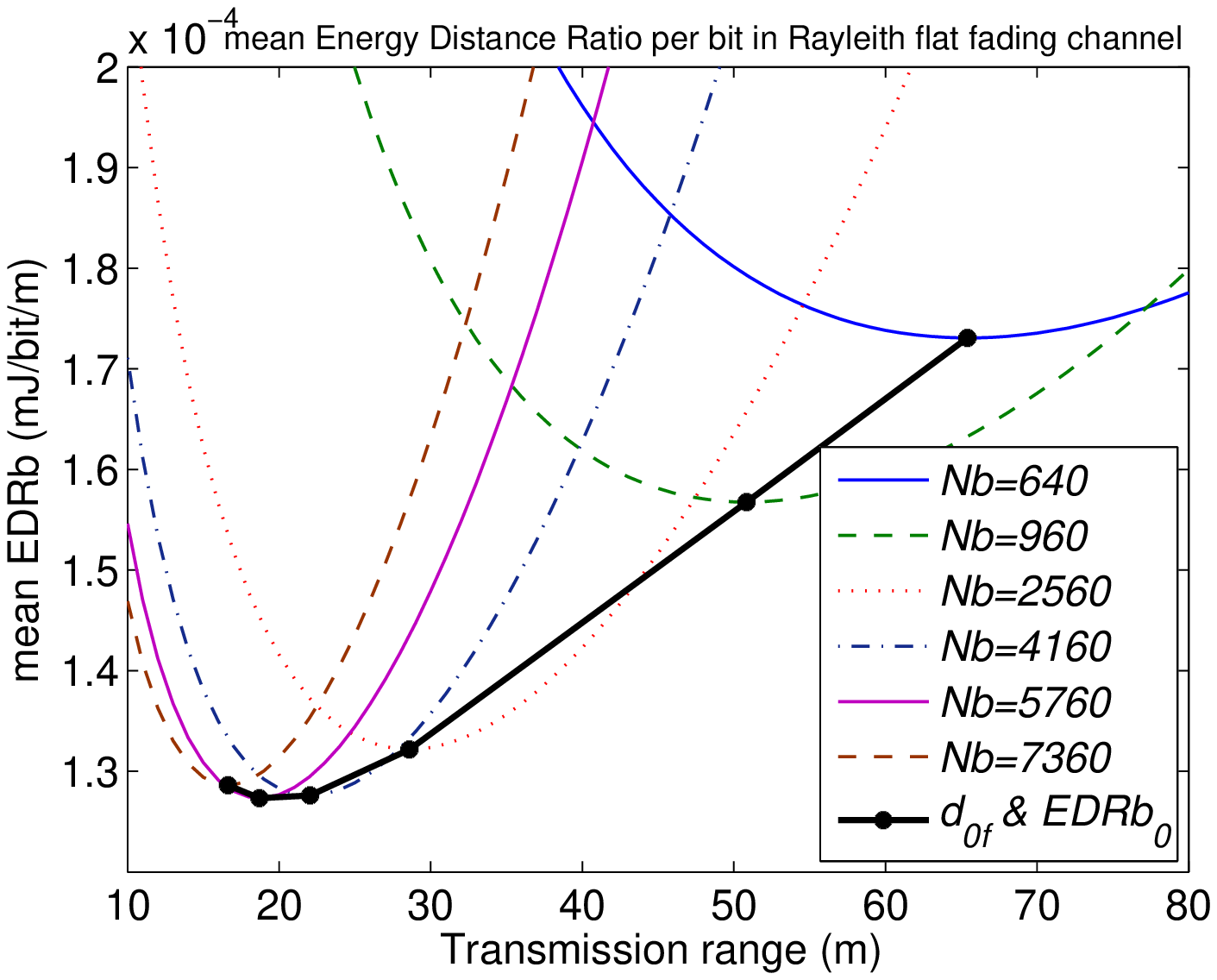}}
\hfil \subfigure[Nakagami block fading channel $m=1$]{
\includegraphics[width=0.6\textwidth]{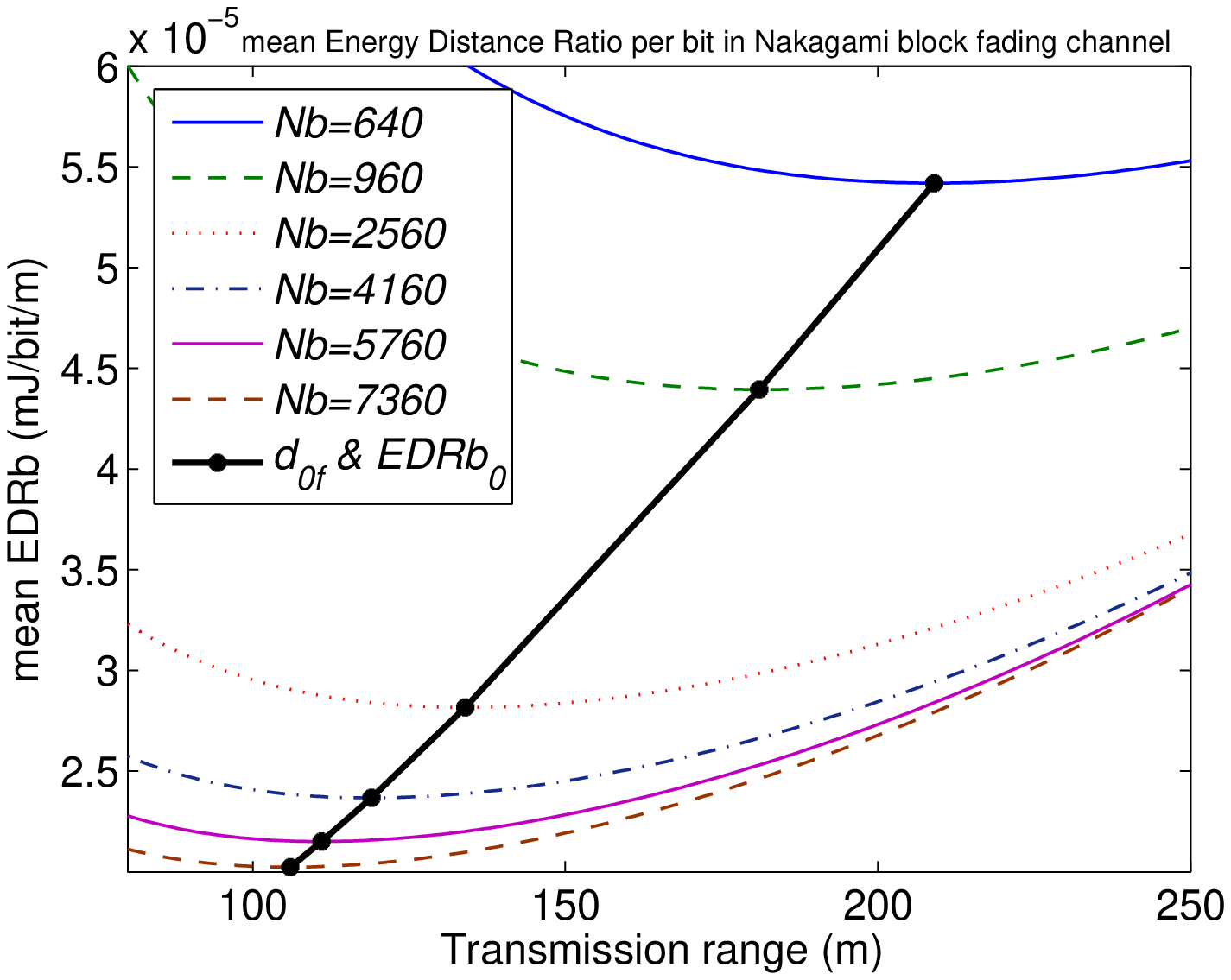}}
\caption{Impact of the packet size $N_b$ on the one-hop transmission
energy performance given by $\overline{EDRb}$ as a function of $d$.}
\label{fig:effect_Nb}
\end{figure*}
Fig.~\ref{fig:effect_Nb} shows how the optimal $\overline{EDRb}$
varies with $N_b$ and the corresponding optimal transmission range
for the three kinds of channel. In AWGN channel and Nakagami block
fading channel, the optimal $\overline{EDRb}$ and the optimal
transmission range decrease with the increase of $N_b$. In contrast,
for Rayleigh flat fading channel, there is an optimal $N_b$ that
originates from the trade-off between the variable transmission
energy ($K_1\cdot P_t$) and $E_c$. The proportion of $K_1\cdot P_t$
rises in the total energy consumption with the increase of $N_b$,
which trades off $E_c$. The increase of $N_b$ results in the
decrease of the link probability, which leads to the decrease of the
optimal transmission range. It can be deduced from
Fig.~\ref{fig:effect_Nb} that larger packets need less energy but
more hops and higher delays.


\paragraph*{Impact of the rate}
\begin{figure*}[!t]
    \centering
    \subfigure[AWGN channel]{
    \includegraphics[width=0.6\textwidth]{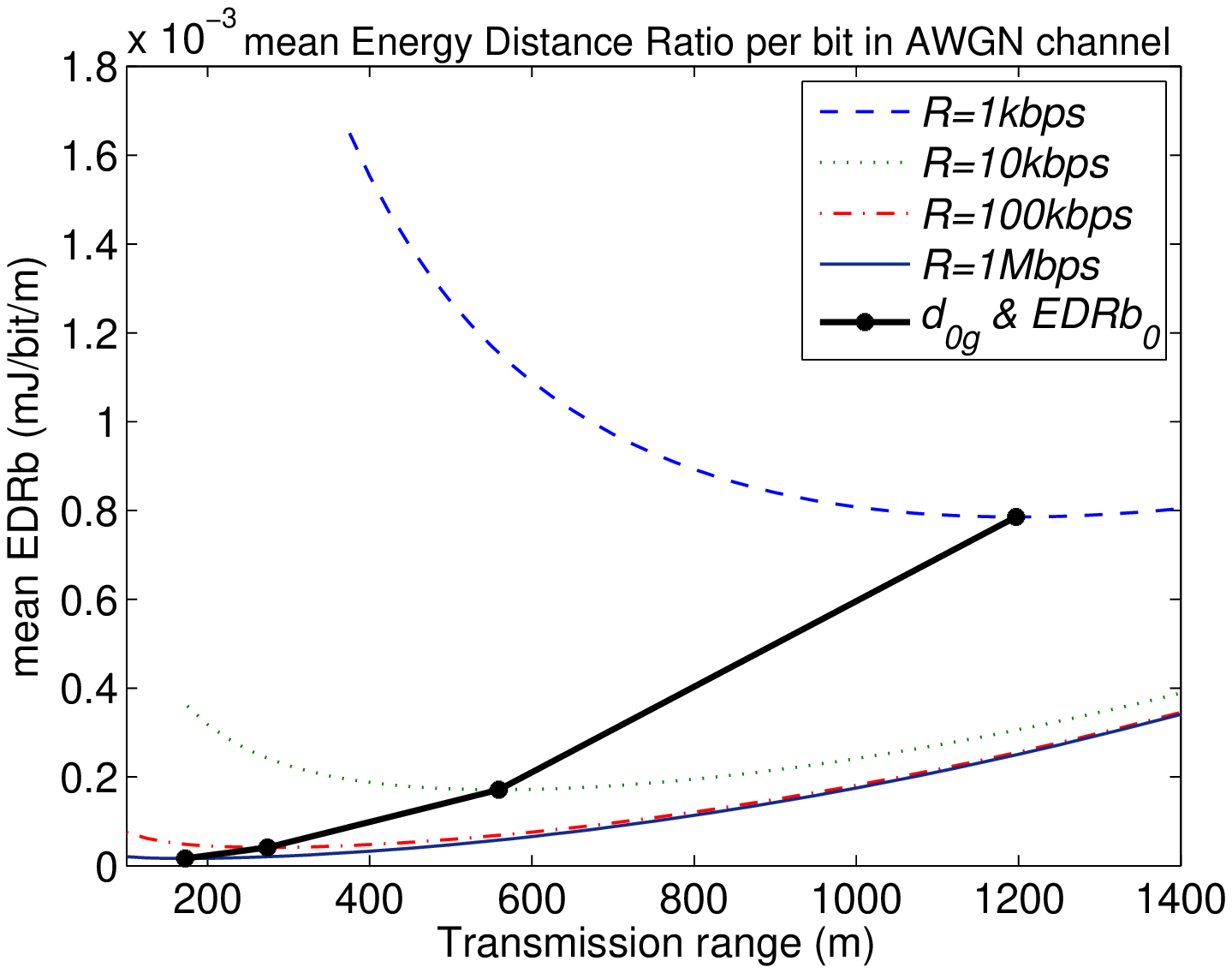}}
    \hfil
    \subfigure[Rayleigh flat fading channel]{
    \includegraphics[width=0.6\textwidth]{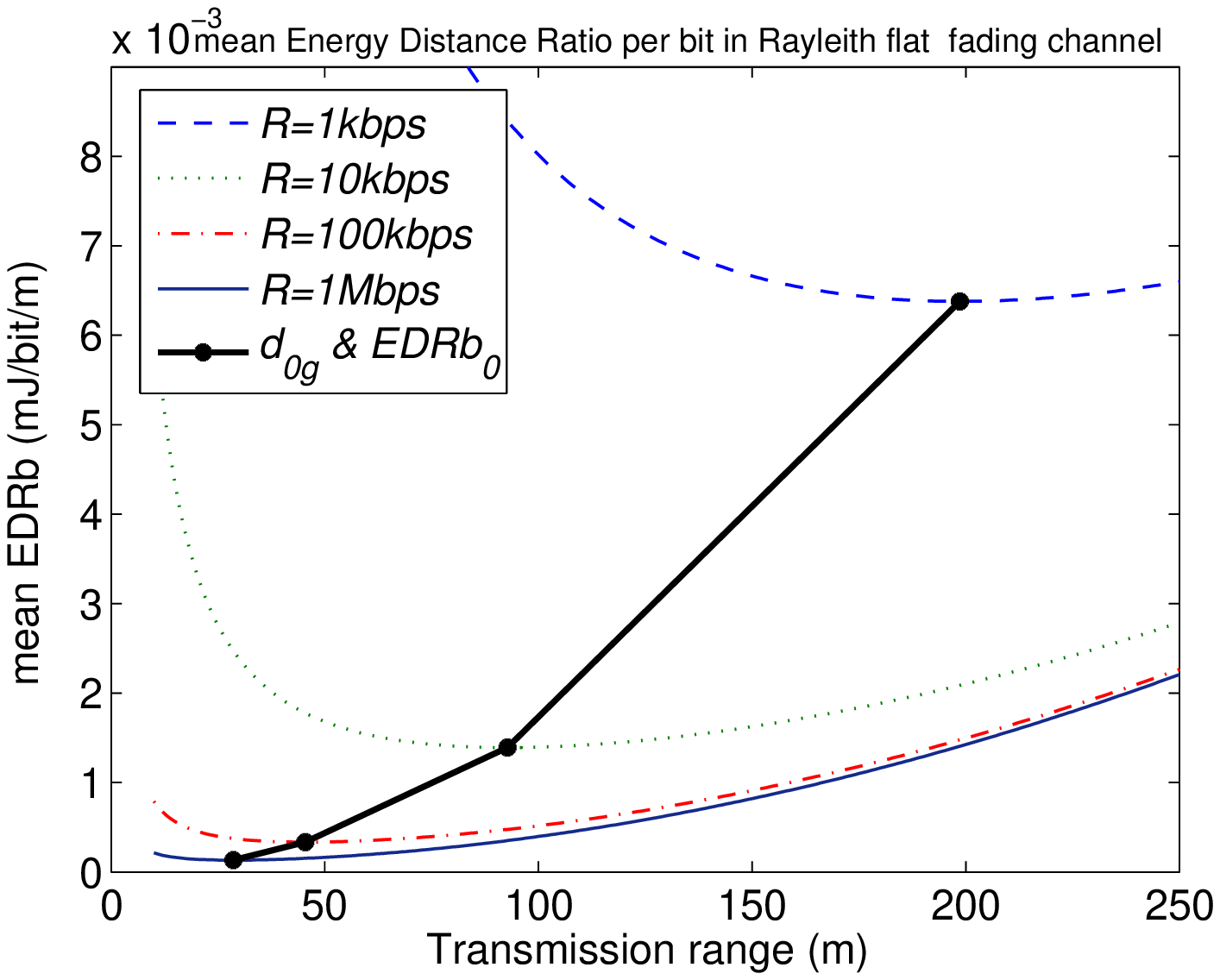}}
    \hfil
    \subfigure[Nakagami block fading channel $m=1$]{
    \includegraphics[width=0.6 \textwidth]{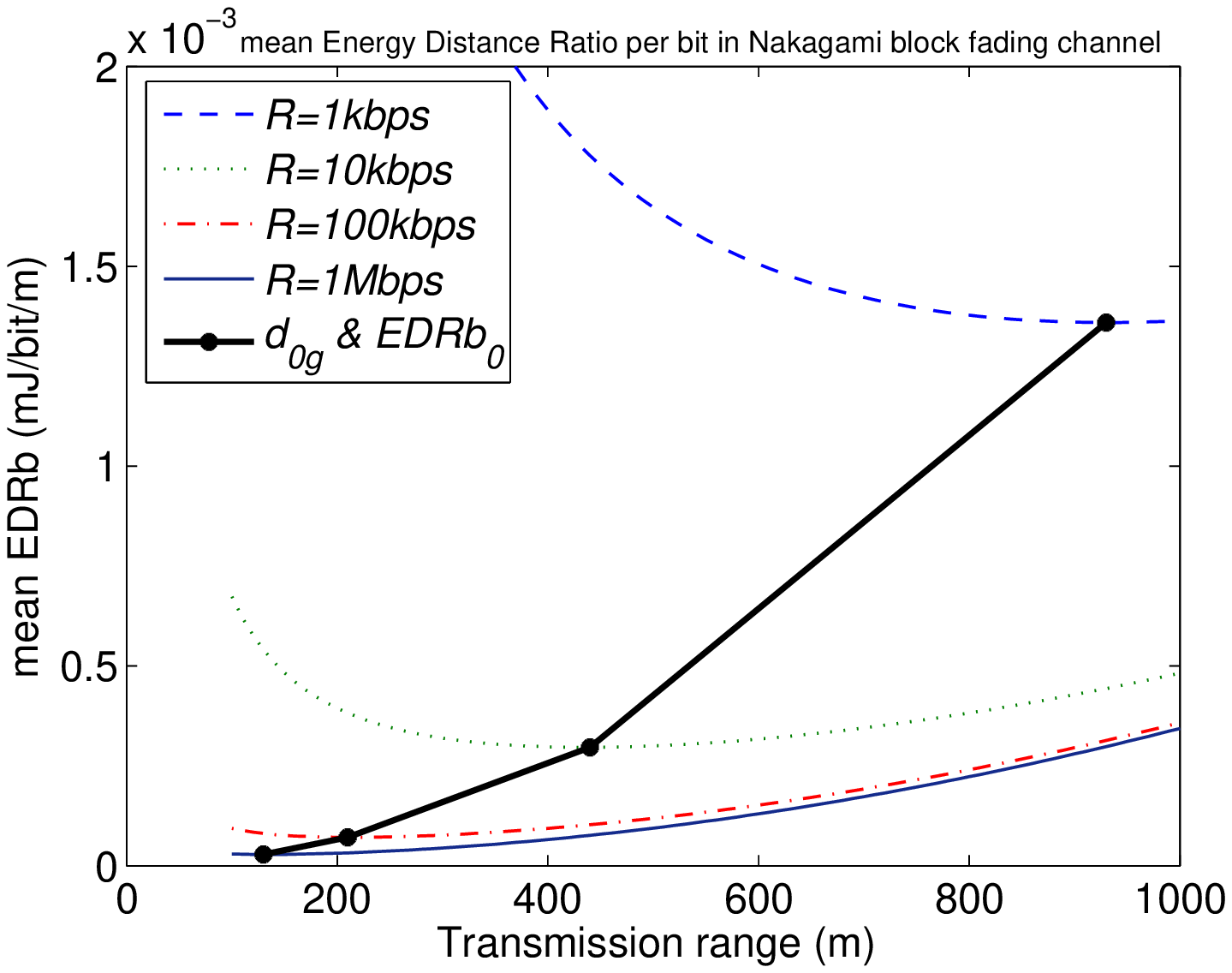}}
    \caption{Impact of the rate $R$ on the one-hop transmission energy
    performance given by $\overline{EDRb}$ as a function of $d$.}
    \label{fig:effect_Rtx}
 \end{figure*}
In Fig.~\ref{fig:effect_Rtx}, the increase of transmission rate
leads to the decrease of the link probability according to
Eq.~\eqref{eq:gamma} which brings forth the reduction of the optimal
transmission range. Meanwhile, the reduction of the total energy
consumption results in the decrease of the optimal
$\overline{EDRb}$.

\section{Multi-hop Transmission: Energy Efficiency and Delay}
\label{sec:low_bound} In this section, a multi-hop transmission
along a homogeneous linear network is considered. Nodes are aligned
because a transmission using properly aligned relays is more energy
efficient than a transmission where the same relays do not belong to
the straight line defined by the source and the destination. In this
section, we first prove that the transmission along equidistant hops
is the best way for saving energy in a homogeneous linear network.
Next, the optimal number of hops over a homogeneous linear network
is derived for a given transmission distance according to the
optimal one-hop transmission distance. Finally, a lower bound on the
energy efficiency and its delay is obtained for the considered
multi-hop transmission.

\subsection{Minimum mean total energy consumption}
\begin{theorem}
\label{the:minenergy} In a homogeneous linear network, a source node
$x$ sends a  packet of $N_b$ bits to a destination node $x'$ using
$n$ hops. The distance between $x$ and $x'$ is $d$. The length of
each hop is $d_1$, $d_2$, \ldots, $d_n$ respectively and the average
EDRb is denoted $\overline{EDRb}(d)$. The minimum mean total energy
consumption $\overline{Etot}_{min}$ is obtained for if and only if
$d_1=d_2=\ldots=d_n$:
\begin{equation}
\overline{Etot}_{min} = N_b\cdot \overline{EDRb}(d/n)\cdot d.
\end{equation}
\end{theorem}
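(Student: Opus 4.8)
The plan is to reduce the multi-hop energy to a sum of independent per-hop contributions and then show that this sum, under a fixed total distance, is minimized precisely by equal hops through a convexity (Jensen) argument.

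First I would write out the mean total energy. By \eqref{eq:E1hop}--\eqref{eq:EDRb}, a hop of length $d_i$ consumes, per transmitted bit and including all retransmissions, the amount $\overline E_{1hop}(d_i) = d_i\,\overline{EDRb}(d_i)$; summing the $n$ hops (retransmissions on distinct hops being independent) and multiplying by $N_b$ gives
\[
\overline{Etot} = N_b \sum_{i=1}^n d_i\,\overline{EDRb}(d_i).
\]
Setting $\phi(x) := x\,\overline{EDRb}(x)$, the mean one-hop energy per bit seen as a function of hop length, the problem becomes: minimize $\sum_{i=1}^n \phi(d_i)$ subject to $\sum_{i=1}^n d_i = d$ with each $d_i>0$.

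Second, the core step is to establish that $\phi$ is \emph{strictly convex} on the relevant range of hop lengths. Granting this, Jensen's inequality yields $\frac1n\sum_i \phi(d_i) \ge \phi\!\left(\frac1n\sum_i d_i\right) = \phi(d/n)$, i.e. $\sum_i \phi(d_i) \ge n\,\phi(d/n)$, with equality if and only if $d_1=\cdots=d_n=d/n$. Multiplying by $N_b$ and using $n\,\phi(d/n) = n\cdot\frac{d}{n}\,\overline{EDRb}(d/n) = d\,\overline{EDRb}(d/n)$ gives exactly $\overline{Etot}_{min} = N_b\,\overline{EDRb}(d/n)\,d$ together with the claimed uniqueness. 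Equivalently, one may argue by Lagrange multipliers: stationarity forces $\phi'(d_i)=\lambda$ for every $i$, strict convexity makes $\phi'$ injective so all $d_i$ coincide, and convexity guarantees this stationary point is the global minimum.

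Third, I would prove the convexity of $\phi$. Using \eqref{eq:EDRb} one has $\phi(x) = (E_c + K_1 P_t)/p_l(x,P_t)$. If each hop transmits at the common energy-optimal power $P_0$ of \eqref{eq:P0}, which is independent of distance and fading, the numerator is a constant $C := E_c + K_1 P_0$, so $\phi(x) = C/p_l(x,P_0)$ and convexity reduces to showing that $1/p_l(\cdot,P_0)$ is convex in $x$. Since $p_l$ is a smooth, decreasing function of $x$ through $\gamma = K_2 P_0 x^{-\alpha}$, I would compute $\phi''(x)$ and check $\phi''>0$ for each of the three channels by inserting the appropriate $p_l$ from \eqref{eq:ber_awgn}, \eqref{eq:berf} and \eqref{eq:pl_block}; if instead each hop uses its own distance-optimal power $P_0(d_i)$, I would differentiate $\phi$ through the optimal power via the envelope theorem, or simply bound $\phi$ below by the fixed-power expression.

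The main obstacle I expect is precisely this convexity verification. Because $p_l$ is an intricate function of the hop length, built from $(1-PER)^{N_b}$ and, in the Nakagami case, from an integral over $\gamma$, showing $\phi''>0$ in closed form is the delicate part and will likely require a separate argument (a direct second-derivative computation, or a monotonicity argument on $1/p_l$) for each channel rather than a single uniform estimate.
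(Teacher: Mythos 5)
Your decomposition is exactly the paper's: the mean total energy is the sum of independent per-hop energies $N_b\,d_i\,\overline{EDRb}(d_i)$, minimized subject to $\sum_i d_i = d$. Where you diverge is in how the key analytic fact is handled. The paper runs the Lagrange-multiplier computation, obtains $\partial\overline{E}_m/\partial d_m = -\lambda$ for all $m$, and then simply \emph{asserts} that $\partial\overline{E}/\partial d$ is monotonically increasing in $d$ when $\alpha\geq 2$ to conclude uniqueness of the equal-hop solution; it never verifies this monotonicity, nor does it address that stationarity alone does not certify a minimum. You correctly identify that this monotonicity is precisely strict convexity of $\phi(x)=x\,\overline{EDRb}(x)$, and your Jensen route buys something the paper's argument does not: it delivers global optimality and the ``if and only if'' in one step, without needing to rule out boundary behavior or non-minimal stationary points. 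You also flag an ambiguity the paper glosses over, namely which transmission power is used per hop (the common $P_0$ versus the distance-adaptive $P_{0}(d_i)$), and note that the minimum-over-power case requires an envelope or lower-bounding argument. The honest cost of your approach is that the convexity of $\phi$ for each of the three link models remains to be checked, but this is not a gap relative to the paper --- the paper leaves exactly the same fact unproven, just phrased as monotonicity of a derivative. Your write-up is therefore a correct and somewhat more rigorous rendering of the same proof.
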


\begin{proof} The mean energy consumption for each hop of index $m$ is set to
$\overline{E}_m = N_b\cdot\overline{EDRb}(d_m)\cdot d_m,\ m = 1, 2,
\ldots, n$. Since each hop is independent from the other hops, the
mean total energy consumption is \[\overline{Etot} =
\overline{E}_1+\overline{E}_2+\ldots+\overline{E}_n.\]  Hence, the
problem of finding the minimum mean total energy consumption can be
rewritten as:
\begin{align} &\text{minimize} &&\overline{Etot}
\notag \\
&\text{subject to}   && d_1+d_2+\ldots +d_n = d.\notag
\end{align}
Set
\[F=\overline{E}_1+\overline{E}_2+\ldots+\overline{E}_n+\lambda(d_1+d_2+\ldots
+d_n-d), \] where $\lambda\neq0$ is the Lagrange multiplier.
According to the method of the Lagrange multipliers, we obtain
\begin{equation}
\label{eq:lagrange_energy} \left\{
\begin{split}
&\frac{\partial{\overline{E_1}}}{\partial{d_1}}+ \lambda =0\\
&\frac{\partial{\overline{E_2}}}{\partial{d_2}}+ \lambda =0\\
&\ldots\\
&\frac{\partial{\overline{E_n}}}{\partial{d_n}}+ \lambda =0\\
&d_1+d_2+\ldots +d_n = d
\end{split}
\right.
\end{equation}
Eq.~\eqref{eq:lagrange_energy} shows that the minimum value of $F$
is obtained in the case
$\frac{\partial{\overline{E_1}}}{\partial{d_1}}=\frac{\partial{\overline{E_2}}}{\partial{d_2}}
=\ldots =\frac{\partial{\overline{E_n}}}{\partial{d_n}} = -\lambda$.
Moreover, in a homogeneous linear network, the properties of each
node are identical. Therefore,
\[\frac{\partial{\overline{E_m}}}{\partial{d_m}}=\frac{\partial{\overline{E}}}{\partial{d}}\bigg|_{d=d_m}\]
where $m=1,2,\ldots,n $. Because
$\frac{\partial{\overline{E}}}{\partial{d}}$ is a monotonic
increasing function of $d$ when the path-loss exponent follows
$\alpha\geq2$, the unique solution of Eq.~\eqref{eq:lagrange_energy}
is $d_1=d_2=\ldots=d_n=\frac{d}{n}$. Finally, we obtain:\[
\overline{Etot}_{min} = N_b\cdot \overline{EDRb}(d/n)\cdot d.
\] \end{proof}

\subsection{Optimal number of hops}
\label{subsec:opti-hop} Based on Theorem~\ref{the:minenergy} and the
analysis in Section~\ref{sec:one hop}, the optimal hop number can be
calculated from the transmission distance $d$ and the optimal
one-hop transmission distance $d_0$. When $d/d_0$ is an integer,
$[d/d_0]$ is the optimal hop number $N_{hop0}$ as each hop has the
minimum $\overline {EDRb}$ according to Theorem~\ref{the:minenergy}.
When $d/d_0$ is not an integer, setting $\lfloor d/d_0 \rfloor = n$,
the optimal hop number is $N_{hop0}= n\ or\ n+1$, which can be
decided by:
\begin{equation}
\label{eq:hops_opti}
Min\left\{ \overline{EDRb}(d/n), \overline{EDRb}(d/(n+1))\right\}
\end{equation}
where $\lfloor x\rfloor$ provides the largest integer value smaller
or equal to $x$. The transmission range of each hop is now
$d/N_{hop0}$.

\subsection{Lower bound on $\overline{EDRb}$ and its delay}
Substituting the formula $P_0$ and $d_0$ in three kinks of channel
into~\eqref{eq:EDRb} yields:
\begin{equation}
\overline{EDRb} = \frac{E_c+K_1 P_0}{d_0 p_l(P_0,d_0)}
\label{eq:EDRB_lowbound}
\end{equation}
Equation \eqref{eq:EDRB_lowbound} provides the exact lower bound of
$\overline{EDRb}$ on the basis of Theorem~\ref{the:minenergy} and
the analyzes of section~\ref{sec:one hop} for a multi-hop
transmission using $n$ hops. Its corresponding end-to-end delay is
computed as:
\begin{equation}
\label{eq:delay_mh} \overline D_{opt} = N_{hop0}\cdot \overline
D_{ch}
\end{equation}
where  $\overline D_{ch}$ is the one-hop transmission delays and
$ch$ respectively stands for Eq.~\eqref{eq:delay_oh_gaussian} with
respect to AWGN channel, Eq.~\eqref{eq:delay_oh_fading} for Rayleigh
flat fading channel and Eq.~\eqref{eq:delay_oh_block} for Nakagami
block fading channel.

\begin{figure*}[!t]
    \centering
     \subfigure[AWGN channel]{
    \includegraphics[width=0.6\textwidth]{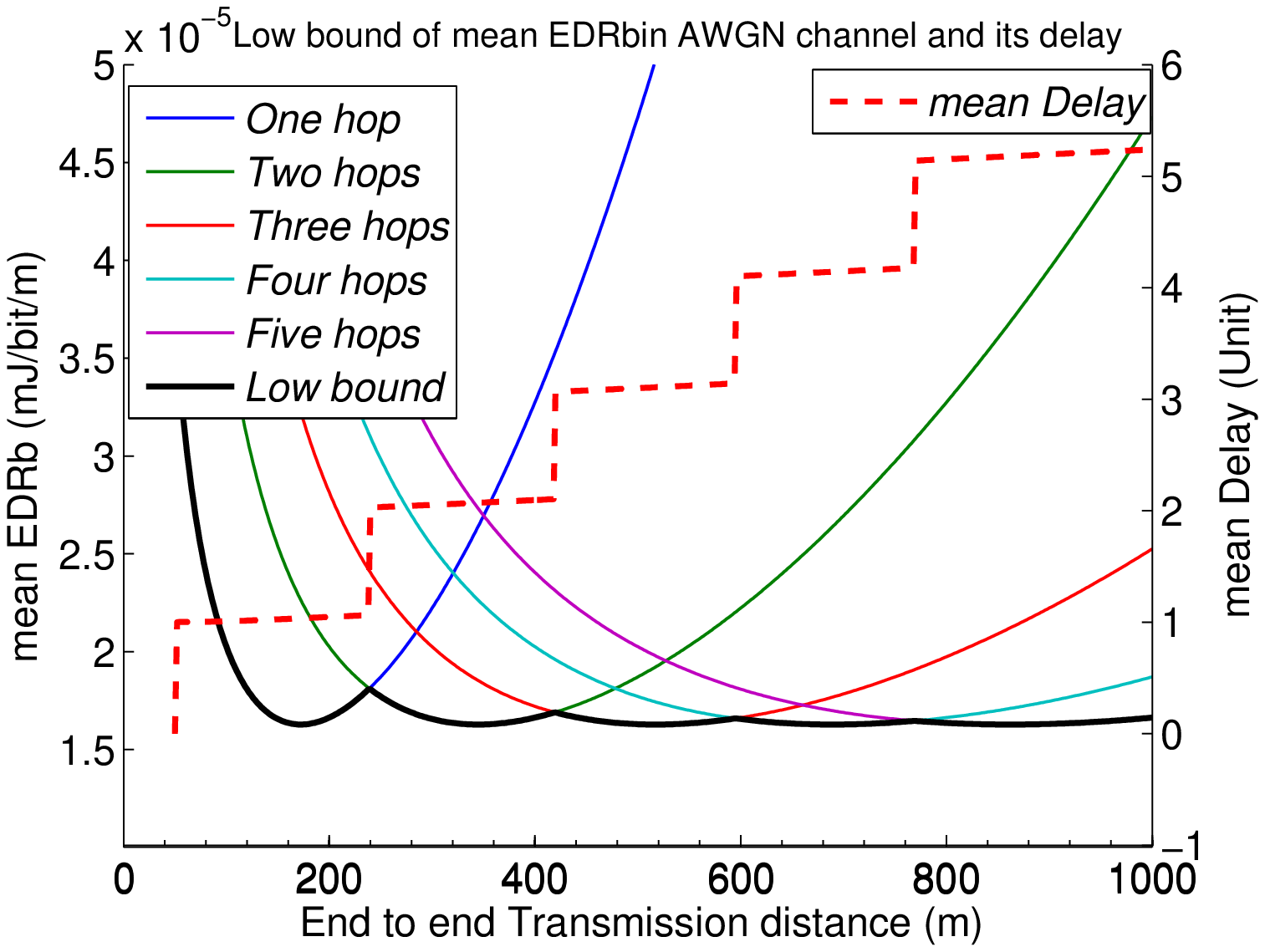}}
    \hfil
    \subfigure[Rayleigh flat fading channel]{
    \includegraphics[width=0.6\textwidth]{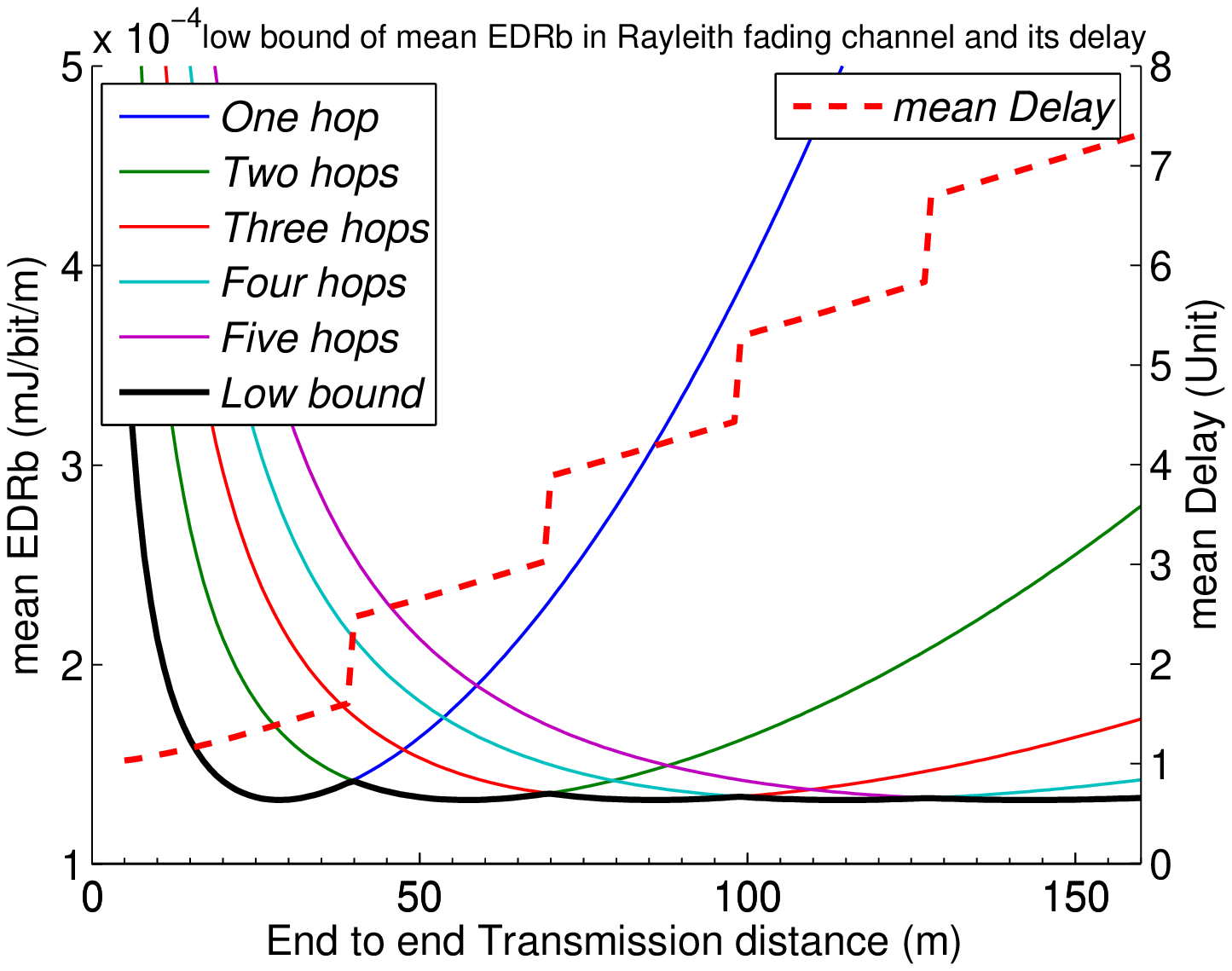}}
    \hfil
    \subfigure[Nakagami block fading channel]{
    \includegraphics[width=0.6\textwidth]{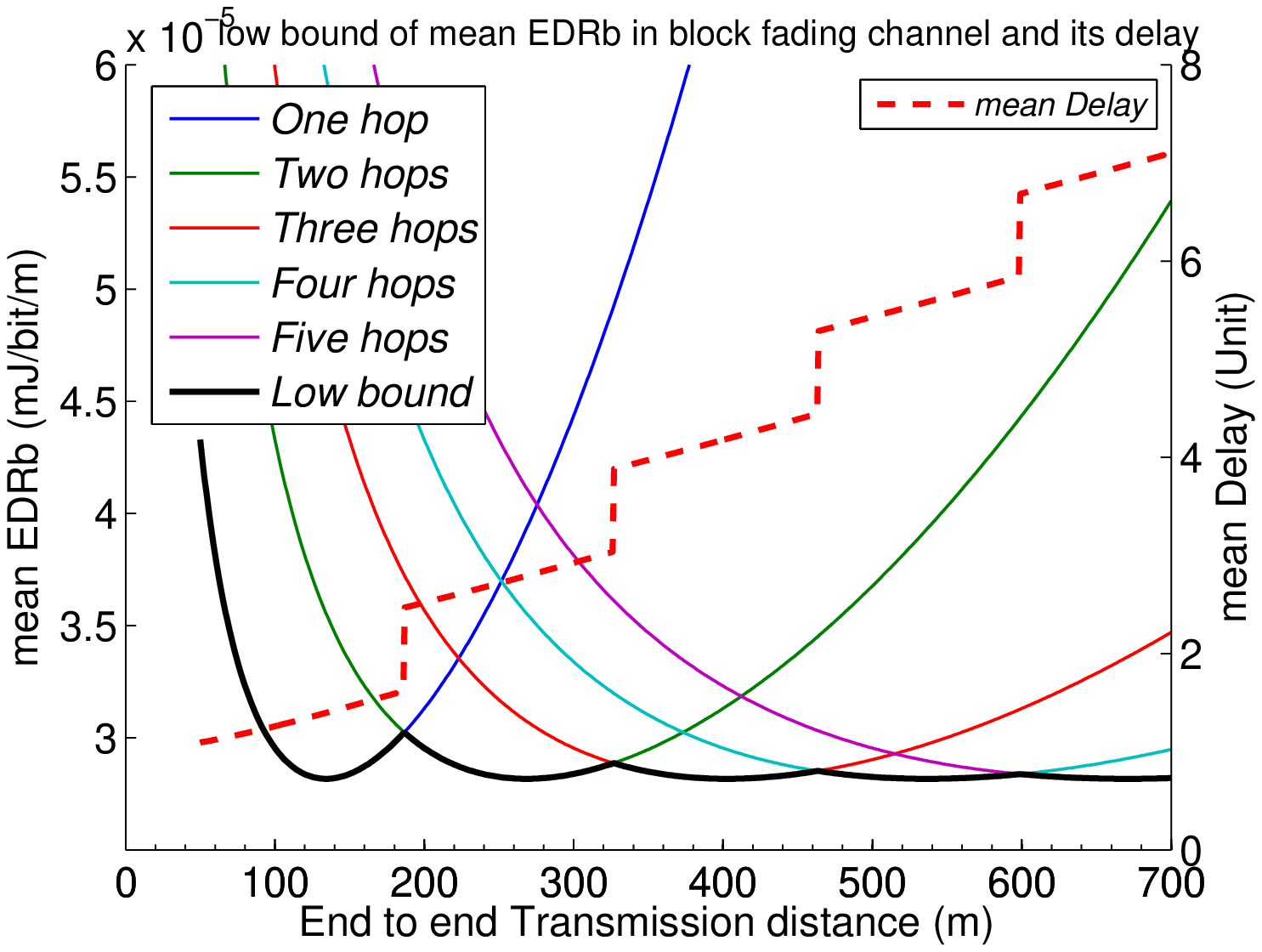}}
    \caption{Theoretical lower bound on $\overline{EDRb}$ and the corresponding mean delay for a homogeneous linear network}
\label{fig:EDRb_low_bound}
\end{figure*}
Fig.~\ref{fig:EDRb_low_bound} represents the theoretical lower bound
on $\overline{EDRb}$ and its corresponding mean delay over AWGN,
Rayleigh flat fading and Nakagami block fading channel. The
corresponding mean delay is obtained by Eq.~\eqref{eq:delay_mh}. It
can be noticed that the minimum value of $\overline{EDRb}$ can be
reached by following for each hop the optimum one-hop distance. It
is shown in section~\ref{sec:sim} that this lower bound is also
valid for 2-dimensional Poisson distributed networks using
simulations.

\section{Energy-Delay Trade-off}
\label{sec:en_delay}A trade-off between energy and delay exists. For
instance when considering long range transmissions, a direct
single-hop transmission needs a lot of energy but yields a shorter
delay while a multi-hop transmission uses less energy but suffers
from an extended delay as shown in Fig.~\ref{fig:EDRb_low_bound}.
This section concentrates on the analyses of the energy-delay
trade-off for both the one-hop and the multi-hop transmissions.

\subsection{Energy-delay trade-off for one-hop transmissions}
\label{subsec:tradeoff_onehop}
\begin{figure*}[!t]
    \centering
    \subfigure[AWGN channel $P_{0g}=535.87mW$]{
    \includegraphics[width=0.6\textwidth]{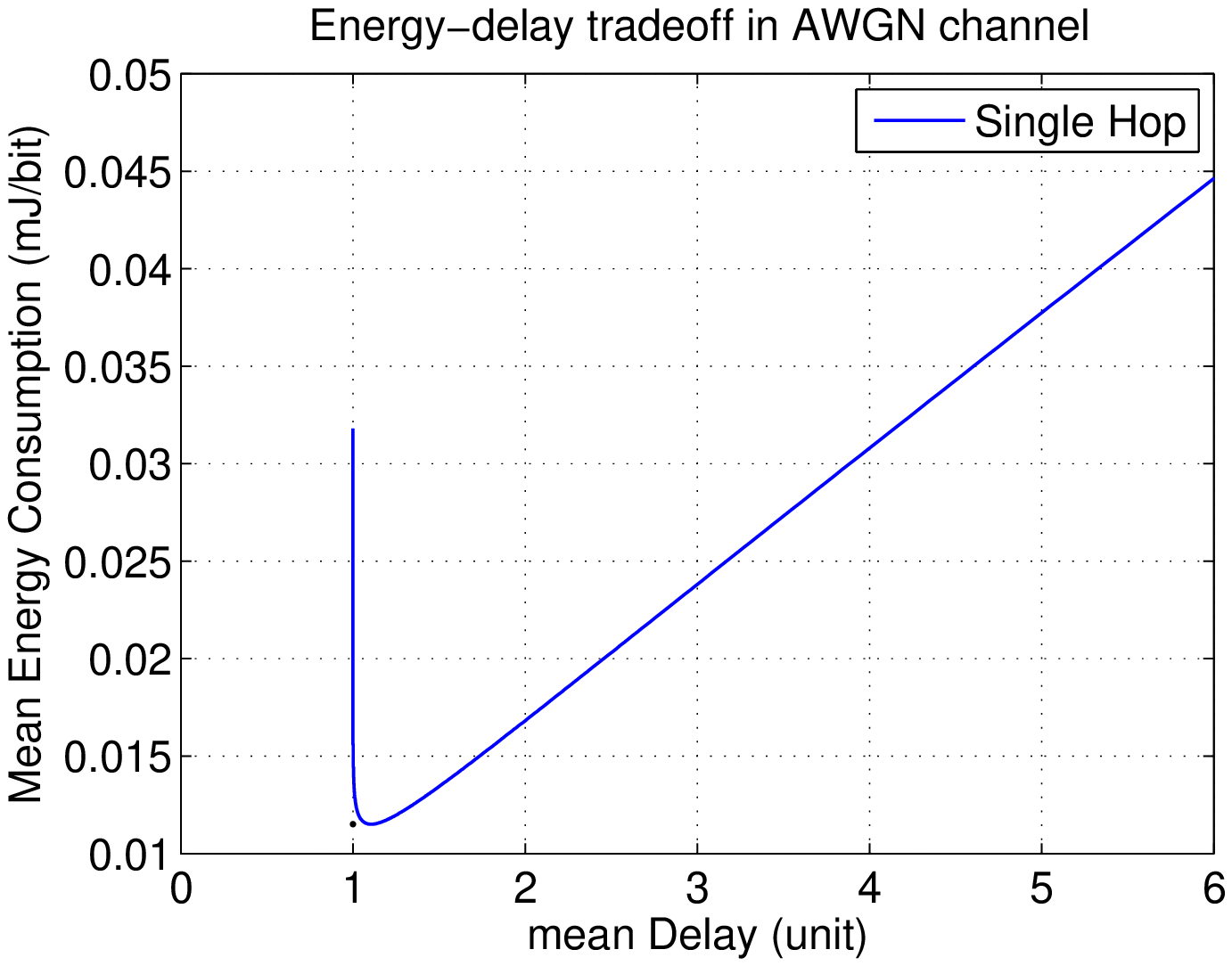}}
    \hfil
    \subfigure[Rayleigh flat fading channel $P_{0f} =734.12 mW$]{
    \includegraphics[width=0.6\textwidth]{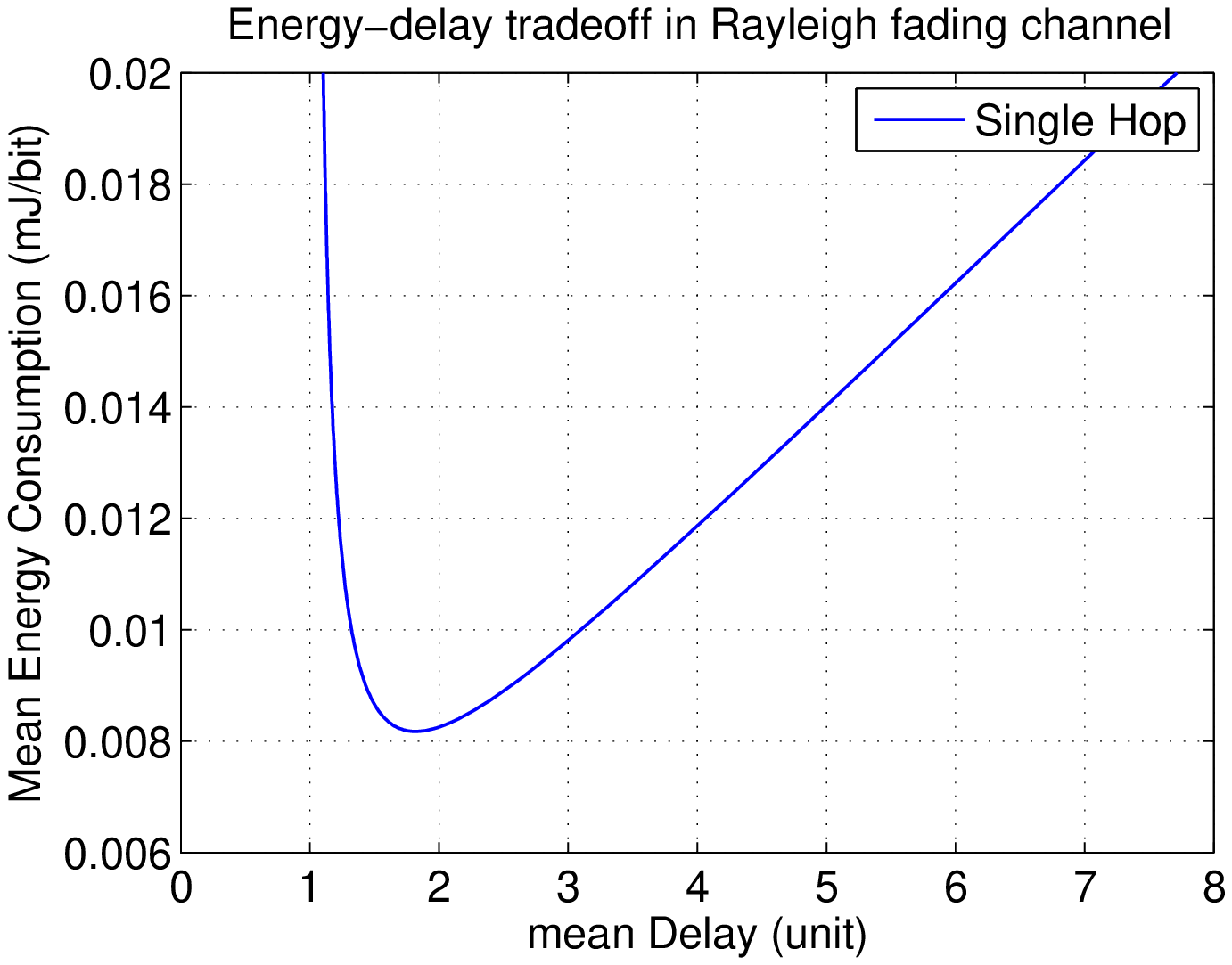}}
    \hfil
    \subfigure[Nakagami block fading channel $P_{0b}=535.87mW$]{
    \includegraphics[width=0.6\textwidth]{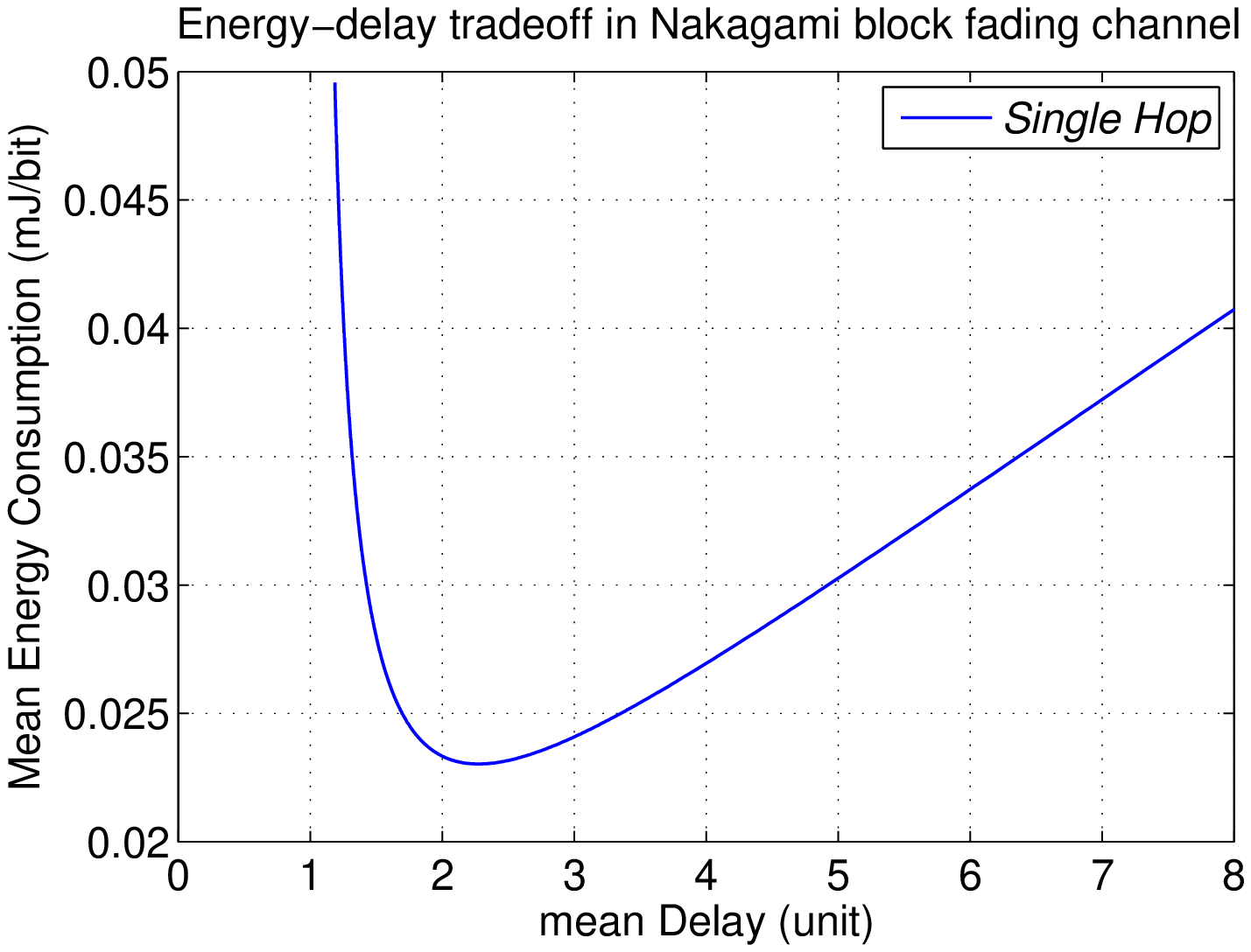}}
    \caption{Energy delay trade-off for the one-hop transmission}
    \label{fig:ed_to_oh}
\end{figure*}
Fig.~\ref{fig:ed_to_oh} shows the energy-delay trade-off of one-hop
transmission at a given distance $d=380m$ in AWGN and Nakagami block
fading channel and $d=50m$ in Rayleigh flat fading channel. These
three curves are obtained by varying the transmission power under
this fixed transmission range. The mean delay is computed
respectively using
Eq.~\eqref{eq:delay_oh_gaussian},~\eqref{eq:delay_oh_fading}
and~\eqref{eq:delay_oh_block} and the mean energy consumption is
calculated by Eq.~\eqref{eq:EDRb} over each kind of channel. The
lowest points on the three curves represent the minimum energy
consumptions possible for each type of channel. They correspond
respectively to the energy-optimum power values $P_{0g}=535.87mW$,
$P_{0f} =734.12 mW$, $P_{0b}=535.87mW$ which are the same than the
ones obtained with Eq.~\eqref{eq:P0_d_g},~\eqref{eq:P0_d}
and~\eqref{eq:P0_d_b} in section~\ref{sec:one hop}.

In Fig.~\ref{fig:ed_to_oh}, each curve can be analyzed according to
the transmission power used to obtain the energy-delay value. On
each curve, the points on the left of the minimum energy point are
obtained with transmission powers higher than the energy-optimum
power value $P_0$. The points on the right (i.e. experiencing higher
delays) are obtained for transmission powers smaller than the
energy-optimum power value $P_0$.

When $P_t$ is increasing and $P_t>P_0$, the energy consumption
increases drastically while the mean delay decreases as the link
gets more and more reliable. On the contrary, when $P_t$ is
decreasing and $P_t<P_0$, the energy consumption is increasing with
a slower pace while the mean delay increases as the link gets more
and more unreliable. More and more retransmissions are here
performed, using more energy and increasing the one-hop transmission
delay.

\subsection{Energy-delay trade-off for multi-hop transmissions}
\label{subsec:bound_tradeoff} In section~\ref{subsec:opti-hop}, the
lower bound on the energy efficiency for a given transmission
distance and its corresponding delay are analyzed determining the
point of minimum energy and largest delay for a multi-hop
transmission. However, in some applications subject to delay
constraints, the energy consumption can be raised to diminish the
transmission delay. Therefore, the energy-delay trade-off for
multi-hop transmissions is analyzed in the following. To determine
the energy-delay trade-off for multi-hop transmissions, we still
consider a linear homogeneous network and show in Theorem
\ref{the:mindelay} that the minimum mean delay is also obtained for
equidistant hops.

\begin{figure*}[!t]
    \centering
    \subfigure[AWGN channel $d = 380m$]{
    \includegraphics[width=0.6\textwidth]{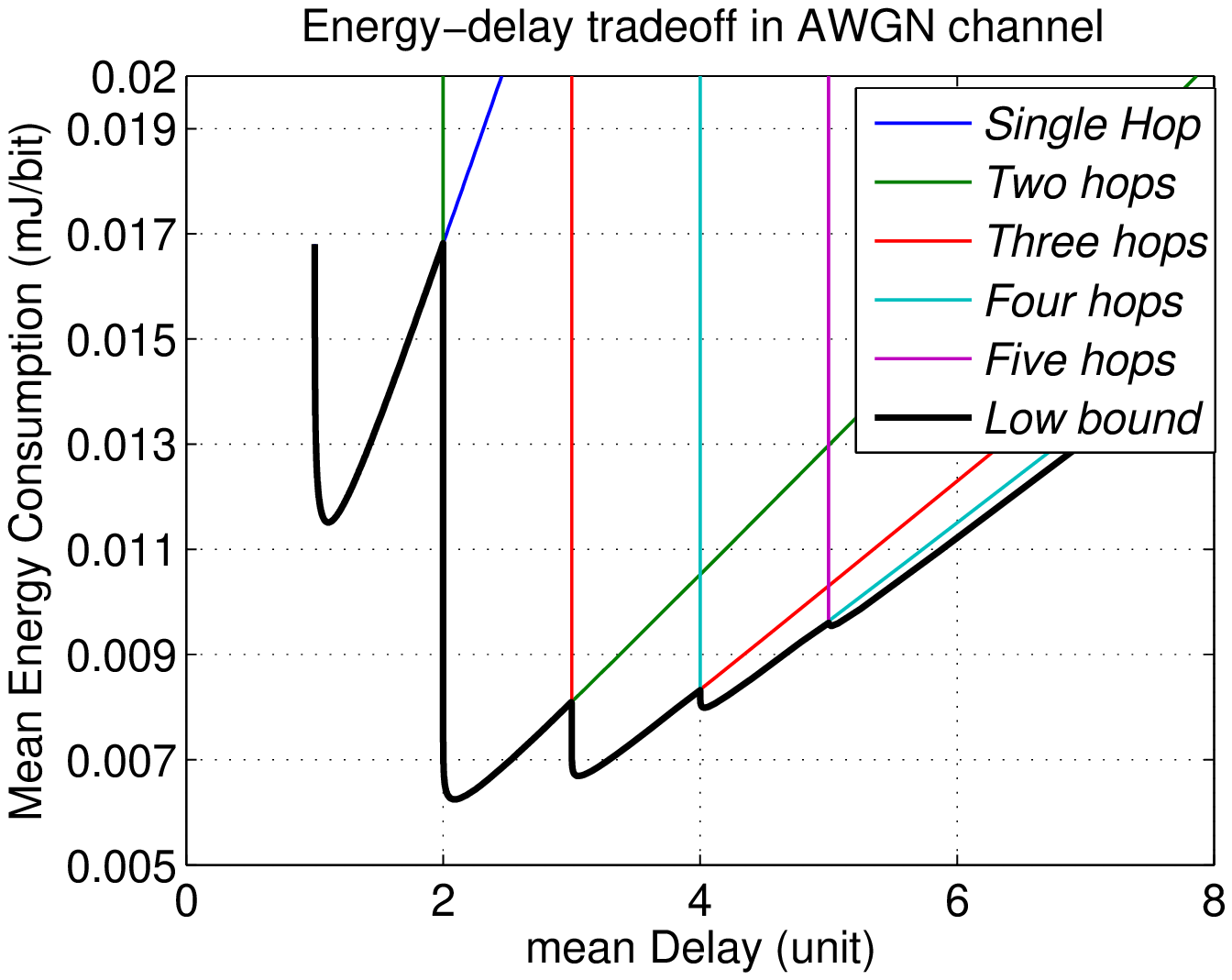}}
    \hfil
    \subfigure[Rayleigh flat fading channel $d = 50m$]{\includegraphics[width=0.6\textwidth]{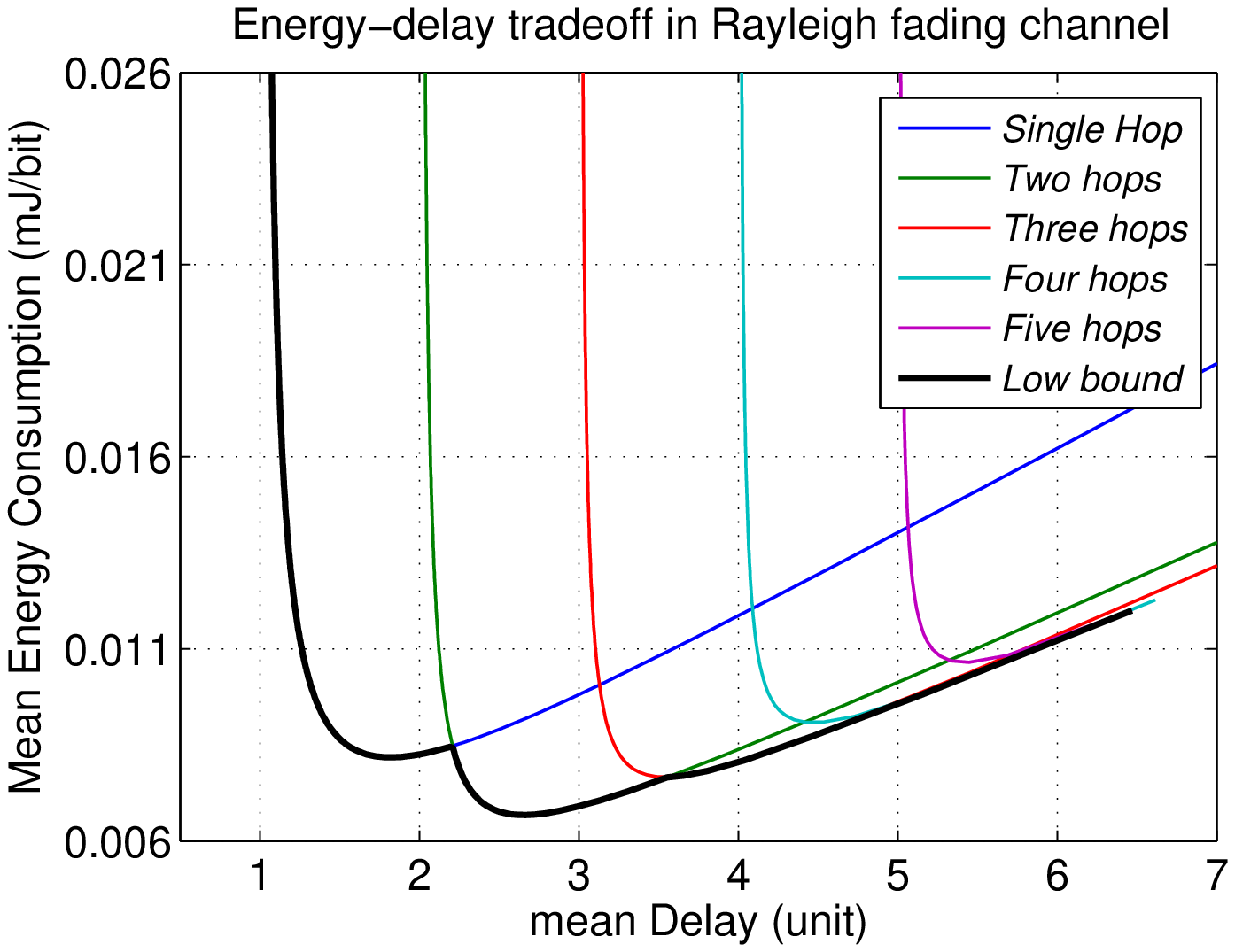}}
    \hfil
    \subfigure[Nakagami block fading channel $d = 380m$]{\includegraphics[width=0.6\textwidth]{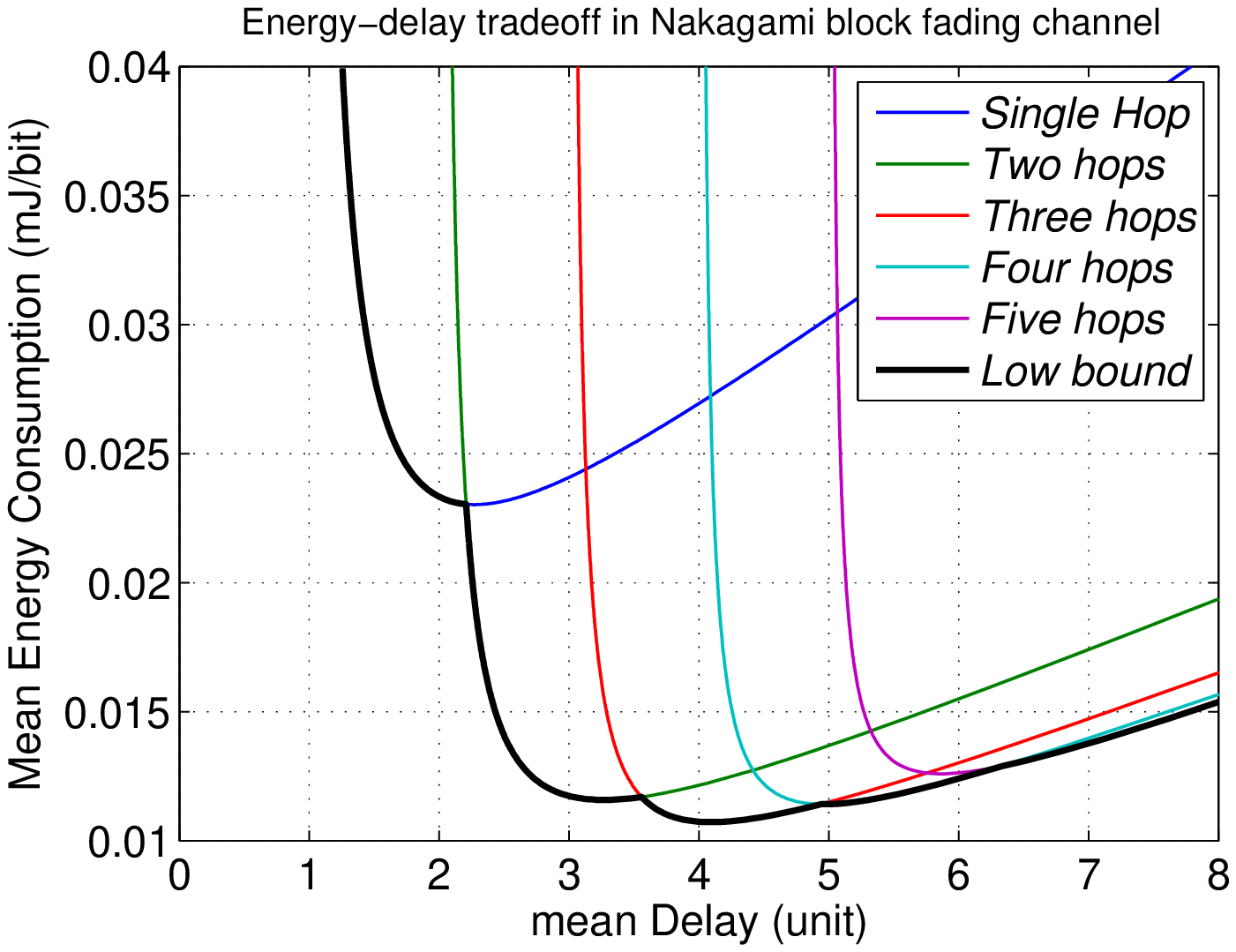}}
    \caption{Energy delay trade-off for the multi-hop transmission}
    \label{fig:ed_to_mh}
\end{figure*}

\begin{theorem} \label{the:mindelay} In
a homogeneous linear network, a source node $x$ sends a packet of
$N_b$ bits to a destination node $x'$ using $n$ hops. The distance
between $x$ and $x'$ is $d$. The length of each hop is $d_1$, $d_2$,
\ldots, $d_n$ respectively and the mean end to end delay is referred
to as $\overline{D}(d)$. The minimum mean end to end delay
$\overline{Dtot}_{min}$ is given by:
\begin{equation}
\overline{Dtot}_{min} =\overline{D}(d/n)\cdot n
\end{equation}
if and only if $d_1=d_2=\ldots=d_n$.
\end{theorem}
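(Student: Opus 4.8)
The plan is to mirror the proof of Theorem~\ref{the:minenergy}, replacing the per-hop energy by the per-hop delay. Since the hops are independent and the network is homogeneous, the mean end-to-end delay decomposes as a sum of identical one-hop delay functions evaluated at the individual hop lengths,
\[
\overline{Dtot} = \overline{D}(d_1) + \overline{D}(d_2) + \ldots + \overline{D}(d_n),
\]
where the per-hop delay is $\overline{D}(d_m) = 1/p_l(d_m,P_t)$ by~\eqref{eq:delay_onehop} and $P_t$ is the common transmission power shared by all hops. The problem is then to minimize $\overline{Dtot}$ subject to the linear constraint $d_1 + d_2 + \ldots + d_n = d$, which has exactly the same form as the energy minimization in Theorem~\ref{the:minenergy}.

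First I would introduce the Lagrangian
\[
F = \overline{D}(d_1) + \overline{D}(d_2) + \ldots + \overline{D}(d_n) + \lambda(d_1 + d_2 + \ldots + d_n - d),
\]
with $\lambda \neq 0$ the Lagrange multiplier, and write the first-order conditions $\partial \overline{D}/\partial d_m + \lambda = 0$ for $m = 1, \ldots, n$ together with the constraint. By homogeneity every hop carries the same delay function, so $\partial \overline{D}_m/\partial d_m = (\partial \overline{D}/\partial d)|_{d=d_m}$, and the stationarity conditions collapse to
\[
\frac{\partial \overline{D}}{\partial d}\bigg|_{d=d_1} = \frac{\partial \overline{D}}{\partial d}\bigg|_{d=d_2} = \ldots = \frac{\partial \overline{D}}{\partial d}\bigg|_{d=d_n} = -\lambda.
\]

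The decisive step is to argue that $\partial \overline{D}/\partial d$ is strictly monotonic in $d$: then the common value $-\lambda$ is attained at a single argument, which forces $d_1 = d_2 = \ldots = d_n = d/n$ as the unique stationary point, and evaluating $\overline{Dtot}$ there yields $\overline{Dtot}_{min} = \overline{D}(d/n)\cdot n$. Equivalently, strict monotonicity of this derivative is precisely strict convexity of $\overline{D}(\cdot)$, from which the same conclusion follows directly by Jensen's inequality. The intuition is that as $d$ grows the per-hop SNR $\gamma = K_2 P_t d^{-\alpha}$ falls, the link probability $p_l$ decreases toward $0$, and the retransmission factor $1/p_l$ blows up at an accelerating rate, so the curve bends upward.

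The hard part will be establishing this convexity rigorously, since, unlike the energy case where monotonicity of $\partial \overline{E}/\partial d$ followed cleanly from $\alpha \geq 2$, the delay is $1/p_l$ and the sign of its second derivative depends on the particular channel. I would therefore verify strict convexity channel by channel from the explicit delay expressions~\eqref{eq:delay_oh_gaussian},~\eqref{eq:delay_oh_fading} and~\eqref{eq:delay_oh_block}, differentiating $1/p_l(d,P_t)$ twice in $d$ and checking positivity over the operating range; the point that must be controlled is the interplay between the convex, decreasing map $d \mapsto \gamma$ and the behavior of $1/p_l$ as a function of $\gamma$.
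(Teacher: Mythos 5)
Your proposal is correct and follows essentially the same route as the paper's own proof: decompose $\overline{Dtot}$ as a sum of identical one-hop delay functions, apply Lagrange multipliers under the constraint $d_1+\ldots+d_n=d$, use homogeneity to reduce the stationarity conditions to equality of $\partial\overline{D}/\partial d$ at each $d_m$, and invoke monotonicity of that derivative to force $d_1=\ldots=d_n=d/n$. If anything you are more candid than the paper about the one step both arguments leave unproven — the paper simply asserts that $\partial\overline{D}/\partial d$ is monotonically increasing for $\alpha\geq 2$, whereas you correctly identify this convexity claim as the crux and note it would have to be verified channel by channel from the explicit expressions for $1/p_l(d,P_t)$.
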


\begin{proof} The mean delay of each hop is defined by
$\overline{D}_m, \ m = 1, 2, \ldots, n$. Since each hop is
independent of the other hops, the mean end to end delay is obtained
by:
\[\overline{Dtot} =
\overline{D}_1+\overline{D}_2+\ldots+\overline{D}_n.\] Hence, the
problem can be rewritten as:
\begin{align} &\text{minimize} &&\overline{Dtot}
\notag \\
&\text{subject to}   && d_1+d_2+\ldots +d_n = d.\notag
\end{align}
We set
\[F=\overline{D}_1+\overline{D}_2+\ldots+\overline{D}_n+\lambda(d_1+d_2+\ldots
+d_n-d), \] where $\lambda\neq0$ is the Lagrange multiplier.
According to the method of the Lagrange multipliers, we obtain:
\begin{equation}
\label{eq:lagrange_delay} \left\{
\begin{split}
&\frac{\partial{\overline{D_1}}}{\partial{d_1}}+ \lambda =0\\
&\frac{\partial{\overline{D_2}}}{\partial{d_2}}+ \lambda =0\\
&\ldots\\
&\frac{\partial{\overline{D_n}}}{\partial{d_n}}+ \lambda =0\\
&d_1+d_2+\ldots +d_n = d.
\end{split}
\right.
\end{equation}
Eq.~\eqref{eq:lagrange_delay} shows that the minimum value of $F$ is
obtained in the case
$\frac{\partial{\overline{D_1}}}{\partial{d_1}}=\frac{\partial{\overline{D_2}}}{\partial{d_2}}
=\ldots =\frac{\partial{\overline{D_n}}}{\partial{d_n}} = -\lambda$
Moreover, in a homogeneous network the properties of each node are
identical. Therefore,
\[\frac{\partial{\overline{D_m}}}{\partial{d_m}}=\frac{\partial{\overline{D}}}{\partial{d}}\bigg|_{d=d_m}\]
where $m=1,2,\ldots,n $. Because
$\frac{\partial{\overline{D}}}{\partial{d}}$ is a monotonic
increasing function of $d$ when the path-loss exponent follows
$\alpha\geq2$, the unique solution of Eq.~\eqref{eq:lagrange_delay}
is $d_1=d_2=\ldots=d_n=\frac{d}{n}$. Finally, we obtain:\[
\overline{Dtot}_{min} =  \overline{D}(d/n)\cdot n.
\]
\end{proof}

Based on Theorem~\ref{the:minenergy} and Theorem~\ref{the:mindelay},
we conclude that, regarding a pair of source and destination nodes
with a given number of hops, the only scenario, which minimizes both
mean energy consumption and mean transmission delay, is that each
hop with uniform distance along the linear path.

Fig.~\ref{fig:ed_to_mh} shows the relationship between the mean
energy consumption and the mean delay for a certain transmission
distance in AWGN, Rayleigh and Nakagami block fading channel. The
mean delay is computed with Eq.~\eqref{eq:delay_mh} and the mean
energy consumption is calculated with
Eq.~\eqref{eq:EDRb_d_g},~\eqref{eq:EDRb_d_f} and
\eqref{eq:EDRb_d_b}. According to the Theorems \ref{the:minenergy}
and \ref{the:mindelay}, each relay of the multi-hop transmission
adopts the same transmission power according to the optimal hop
distance. No maximum limit for the transmission power is considered
in the computation. However, it has to be taken into account in
practice.

As shown in Fig.~\ref{fig:ed_to_mh}, we use $d=380m$ for the AWGN
and the Nakagami block fading channel and $d=50$ for the Rayleigh
fading channel. Corresponding optimal number of hops is respectively
$2$ hops, $2$ hops and $3$ hops respectively, which corroborates the
results of Fig.~\ref{fig:ed_to_mh}. The bold black line gives the
mean energy-delay trade-off. Knowing this particular trade-off, the
routing layer can decide how many hops are needed to reach the
destination under a specific transmission delay constraint.

The trade-off curve reveals the relationship between the
transmission power, the transmission delay and the total energy
consumption:
\begin{enumerate}
  \item For smaller delays (fewer hops), more energy is needed due to the high transmission power needed to reach nodes located far away.
  \item An increased energy consumption is not only triggered by communications with few hops but also arises for communications with several hops where the use of a reduced transmission power leads to too many retransmissions, and consequently wastes energy, too. Hence, the decrease of the transmission power does not always guaranty to a reduction of the total energy consumption.
  \item For a given delay constraint, there is an optimal transmission power that minimizes the total energy consumption.
\end{enumerate}

Though the lower bound on the energy-delay trade-off is derived for
linear networks, it will be shown by simulations in the following
section~\ref{sec:sim} that this bound is proper for 2-dimensional
Poisson distributed networks.

\section{Simulations in Poisson distributed networks}\label{sec:sim}
The purpose of this section is to determine the lower bound on the
energy efficiency and on the energy-delay trade-off in a
2-dimensional Poisson distributed network using simulations. The
goal is to show that the theoretical results obtained for a linear
network still hold for such a more realistic scenario. We introduce
this section by defining the characteristic transmission range.

\subsection{Characteristic transmission range}
\label{subsec:dc} The characteristic transmission range is defined
as the range $d_c$ where $EDRb\_1hop(d_c)=EDRb\_2hop(d_c)$, i.e.,
the total energy consumption of a two-hop transmission is equal to
that of a one-hop transmission~\cite{energy:Min02} as shown in
Fig.~\ref{fig:dc}. In a geographical-aware network, the knowledge of
$d_c$ at the routing layer is very useful to decide whether the
optimal transmission can be done in one or two hops. Hence, when the
transmission distance $d$ is greater than $d_c$, the use of a relay
node is beneficial, on the contrary, a direct transmission is more
energy efficient.
\begin{figure}[!t]
\centering
\includegraphics[width=0.8\textwidth]{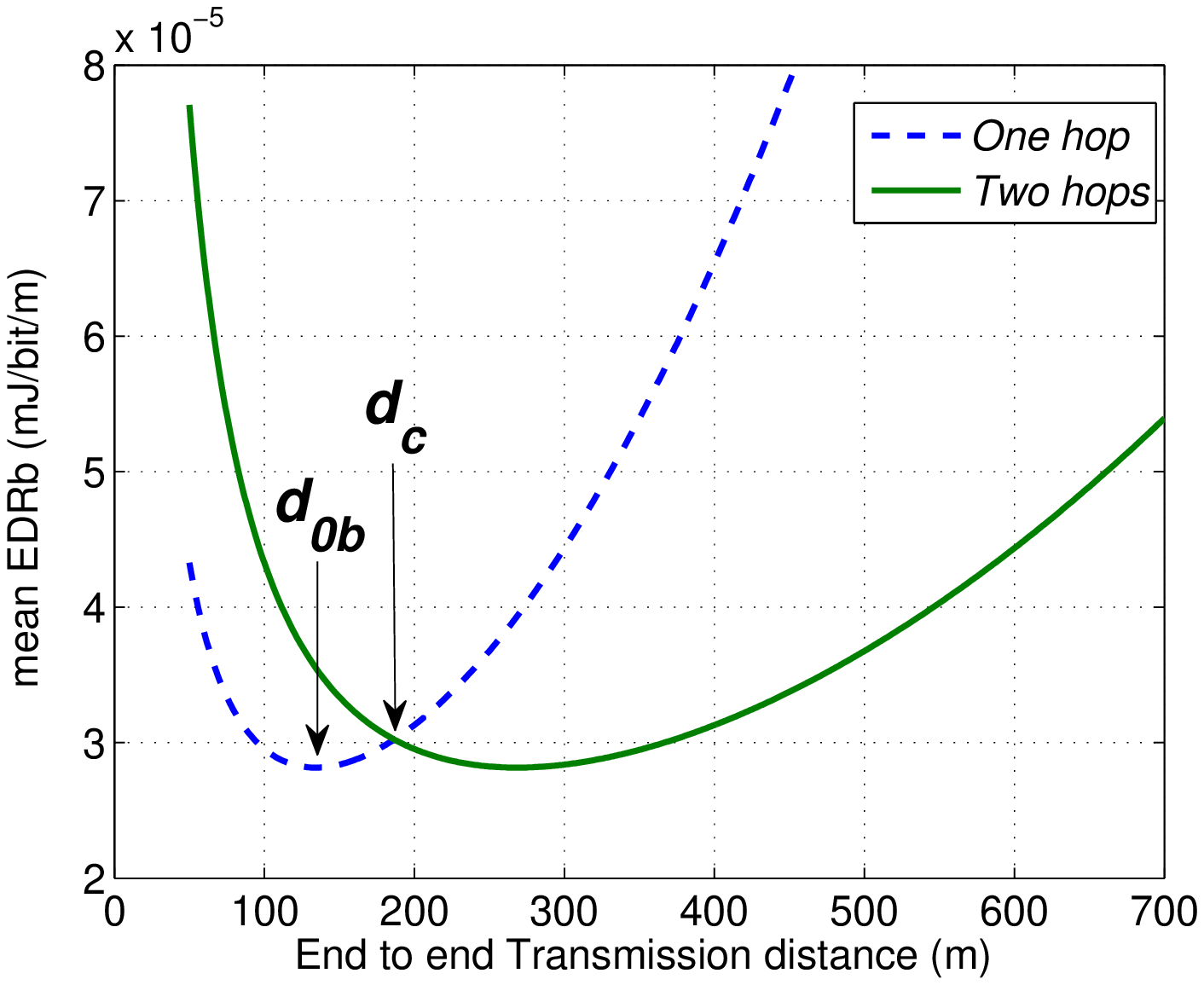}
\caption{Characteristic Transmission Range in a Nakagami block
fading channel where $d_c = 187m$ using the related parameters
listed in Table~\ref{tab:para}} \label{fig:dc}
\end{figure}
\subsection{Simulation setup}
In the simulations, the lower bound on $\overline{EDRb}$ and on the
energy-delay trade-off are evaluated in a square area $A$ of surface
area $S_A=900\times900m^2$. The nodes are uniquely deployed
according to a Poisson distribution:
\begin{equation}
\label{eq:poisson} P(n\: nodes\: in\: S_A)=\frac{(\rho\cdot
S_A)^{n}}{n!}e^{-\rho\cdot S_A}
\end{equation}
where $\rho$ is the node density. All the other simulation
parameters concerning a node are listed in Table~\ref{tab:para}. We
set the node density at $\rho =0.001/m^2$ to ensure a full
connectivity of the network~\cite{con:Gorce07}. The decode and
forward transmission mode is adopted in the simulations.

The network model used in the simulations assumes the following
statements:
\begin{itemize}
    \item The network is geographical-aware, i.e. each node knows the position of
    all the nodes of the network,
    \item A node can adjust its transmission power according to a given transmission range,
    which is determined by the routing layer using Eq.~\eqref{eq:P0_d},~\eqref{eq:P0_d_g}
    or~\eqref{eq:P0_d_b} with respect to AWGN, Rayleigh or Nakagami block fading channel respectively.
    \item A Time Division Multiple Access (TDMA) policy is assumed.
\end{itemize}

\subsection{Simulations of the lower bound on $\overline{EDRb}$}
In these simulations, a very simple routing strategy is adopted as
follows:
\begin{itemize}
\item Step 1: The source node estimates if the distance between
the source and the destination node is smaller than $d_c$; if YES,
transmit packet directly, if NO, go to step 2.
\item Step 2: Select the nodes whose distance from the source are in the
range $d/N_{hop0}\pm(d_c - d_0)$. If no node is chosen, expand the
range step by step (size $d_0$) until reaching the destination node.
\item Step 3: Choose the node closest to the destination node among the nodes chosen in step 2).
\item Step 4: Repeat step 1) to step 3) until the destination node is reached.
\end{itemize}

In the simulations, we test all pairwise source-destination nodes.
The packet size is of $2560$ bits. Then, for each pair of nodes, we
calculate the end to end energy consumption and its Euclidean
distance. The simulation is implemented for $1500$ times, $600$
times and $100$ times respectively corresponding to the node density
$0.0001node/m^2$, $0.0002node/m^2$ and $0.001node/m^2$.

\begin{figure}[!t]
\centering
\includegraphics[width=0.8\textwidth]{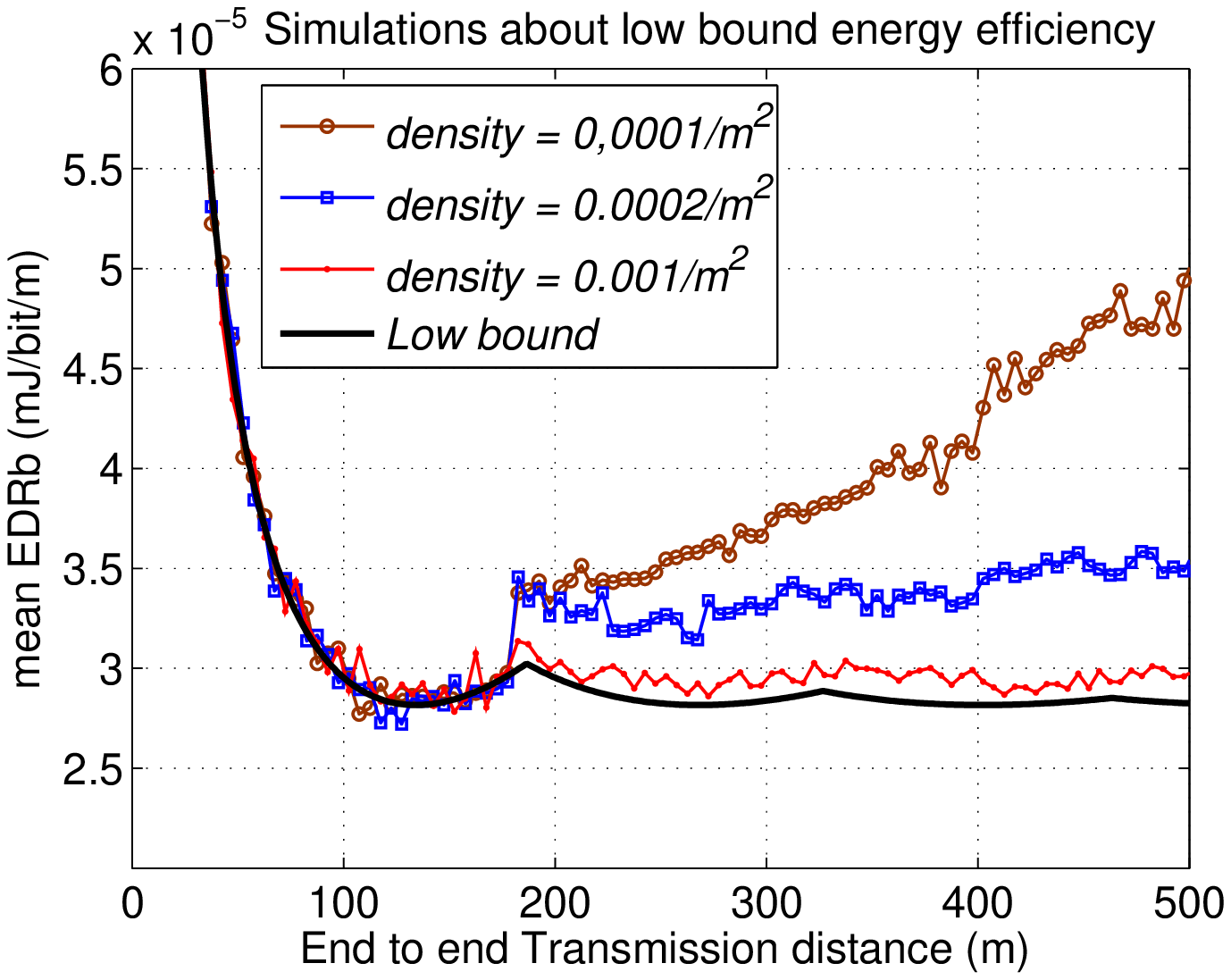}
\caption{Simulation results for the energy efficiency
$\overline{EDRb}$ in Nakagami block fading channel, m
=1.}\label{fig:sim_EDRb_lowbound}
\end{figure}
Fig.~\ref{fig:sim_EDRb_lowbound} shows the simulation results for
the energy efficiency $\overline {EDRb}$ considering different node
densities, a Nakagami block fading channel and a BPSK modulation. We
have $d_0b = 134.16m$ and $d_c = 187m$ in this case. These results
show that:
\begin{enumerate}
  \item The theoretical lower bound on $\overline{EDRb}$ is adequate to
  a 2-D Poisson network although its derivation is based on a linear network.

  When the node density is of $0.001/m^2$, the theoretical lower bound and
  the one obtained by simulations coincide. For this density, a full connectivity
  of the network exists. Hence, we can conclude that our theoretical lower bound
  for the average energy efficiency is suitable for Poisson networks.
  When the node density is reduced, theoretical and simulation based curves for the
  mean $\overline{EDRb}$ diverge when the end to end transmission distance $d$ increases.
  In that case, the source node can not find a relay node in the optimal transmission range
  and has to search for a further relay node which increases the
  energy consumption.

  \item Unreliable links play an important role for energy savings.

  In the simulations, the transmission power is adapted according to the
  transmission distance on the basis of the analysis of
  Section~\ref{sec:one hop}. Hence, unreliable links also contribute to
  attain the lower bound on $\overline{EDRb}$ (as presented in
  Fig.~\ref{fig:EDRb_fading} and~\ref{fig:EDRb_block}, the optimal link probability is about $0.72$).
\end{enumerate}

\begin{figure}[!t]
\centering
\includegraphics[width=0.8\textwidth]{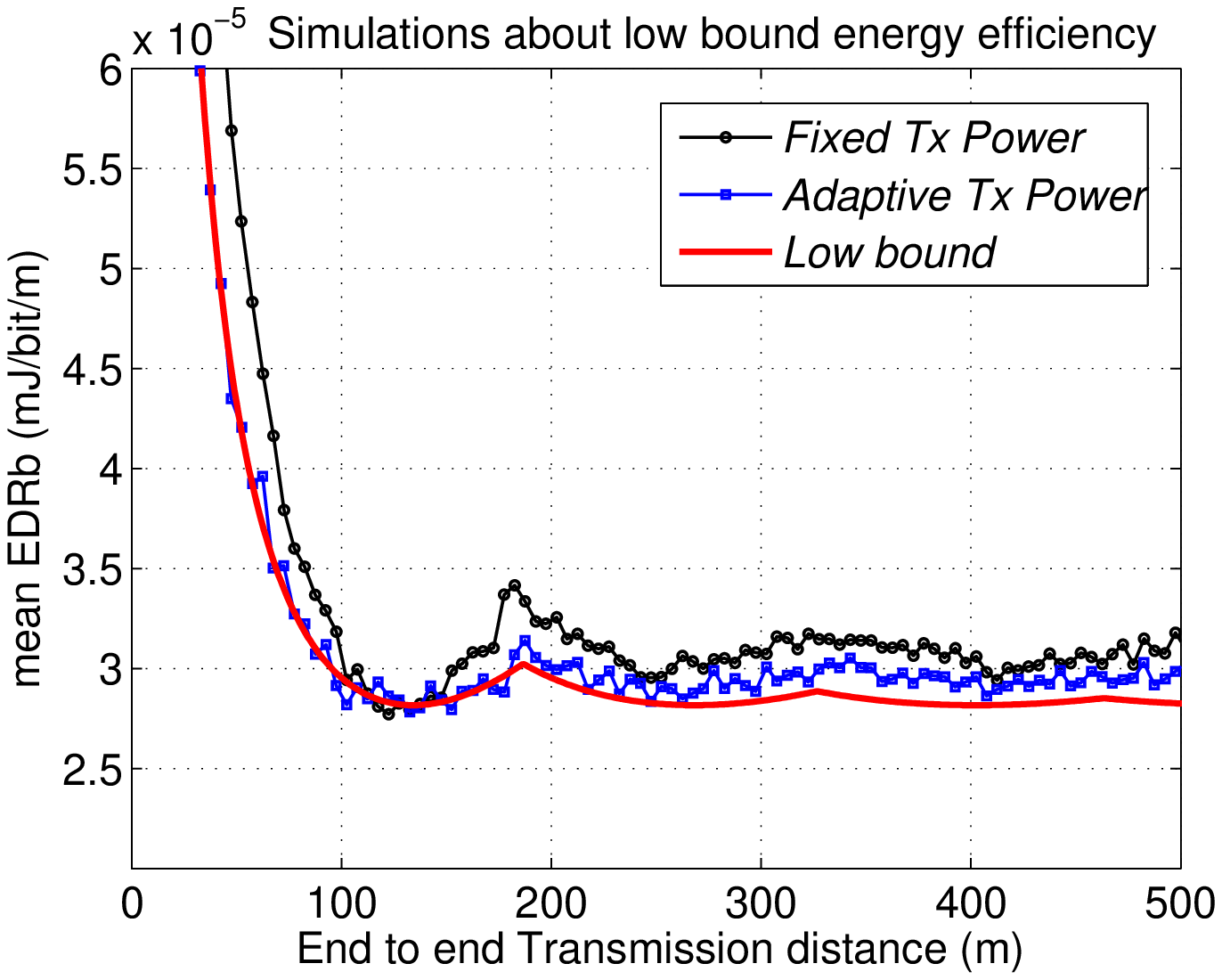}
\caption{Simulation results for $\overline{EDRb}$ in Nakagami block
fading channel using fixed transmission. Here, the node density is
$0.001/m^2$, m=1.}\label{fig:sim_EDRb_fix_pwr}
\end{figure}

Adaptive transmission power is not available in many cheap sensor
nodes. Therefore, we consider a fixed transmission power for each
node in the simulation which is set to the energy-optimal
transmission power of Eq.~\eqref{eq:P0}. Simulation results for a
fully connected network are shown in
Fig.~\ref{fig:sim_EDRb_fix_pwr}. Compared to the adaptive
transmission power mode, nodes with fixed transmission power show a
slightly higher $\overline{EDRb}$, i.e., lower energy efficiency.
Nevertheless, the advantage in terms of simplicity due to the use of
fixed transmission powers makes it worthwhile the little increase in
energy consumption.

\subsection{Simulations of the energy-delay trade-off}
The simulations regarding the energy-delay trade-off are also
implemented for a Nakagami block fading channel and for a fixed
end-to-end transmission distance of $380m$. Regarding each pairs of
 nodes, the source nodes try to use $1$ to $5$ hops in turn.

The following relay selection strategy is adopted knowing the number
of hops:
\begin{itemize}
\item Step 1: Calculate the hop range according to the hop number, i.e., $380m/ hop\ number$.
\item Step 2: Select the set of relay nodes that belong to the 1-hop
transmission range (1-hop length). If the set of relay nodes is
empty, extend the range by 1-hop length until reaching the
destination node.
\item Step 3: Choose the node closest to the destination node among the nodes chosen in step 2).
\item Step 4: Repeat Step 2) and Step 3) until the destination node is reached, then, return selected relay node.
\end{itemize}

The source node and the selected relay node(s) will transmit the
packet with the same transmission power and the value of
transmission power starts form $1dBm$ and increases by $1dBm$ until
$40dBm$. Each simulation is repeated 50 times. Then, we compute the
delay and the energy consumption for each routing. Finally we obtain
the mean delay and mean energy consumption for the same hop number.
In this way, we obtain the low bound of energy-delay trade-off for
three different node densities.

\begin{figure}[!t]
\centering
\includegraphics[width=0.8\textwidth]{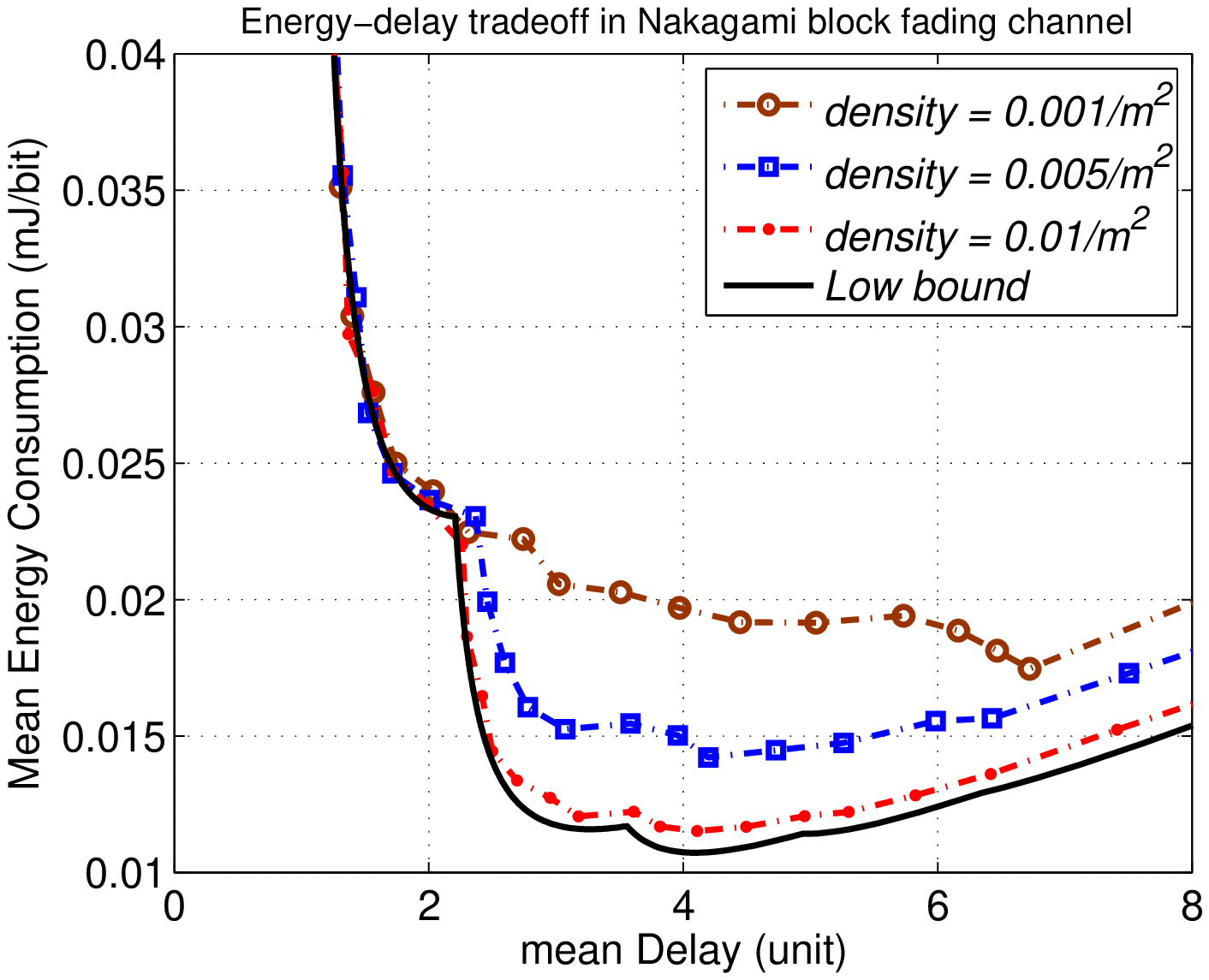}
\caption{Simulation results for the energy-delay trade-off in
Nakagami block fading channel, m =1.} \label{fig:sim_ed_to}
\end{figure}
In Fig.~\ref{fig:sim_ed_to}, simulation results are given for
different node densities. For a node density of $0.01/m^2$, the
lower bound on the energy delay trade-off is reached since there are
enough nodes to find a suitable relay given the delay constraint.
This result indicates that the theoretical lower bound on the
energy-delay trade-off is valid for a Poisson network though its
derivation is based on a linear network. For smaller node densities,
the energy delay trade-off obtained by simulations diverges from the
lower bound since non energy-optimal relays have to be used which
increases the energy consumption and the transmission delay.

\section{Conclusions}
\label{sec:conclu} This paper, using realistic unreliable link
model, explores the low bound of energy-delay trade-off in AWGN
channel, Rayleigh flat fading channel and Nakagami block fading
channel. Firstly, we propose a metric for energy efficiency,
$\overline{EDRb}$, which is combined with the unreliable link model.
It reveals the relation between the energy consumption of a node and
the transmission distance which may contribute to determine optimal
route at the routing layer. By optimizing $\overline{EDRb}$, a
closed form expression of the energy-optimal transmission range is
obtained for AWGN, Rayleigh flat fading and Nakagami block fading
channel. Based on this optimal transmission range, the lower bound
on $\overline{EDRb}$ for a multi-hop transmission using a linear
network is derived for the three different kinds of channel. In
addition, the lower bound on the energy-delay trade-off is studied
for the same multi-hop transmission over a linear network. Results
are then validated using simulations of a 2-D Poisson distributed
network. Theoretical analyses and simulations show that accounting
for unreliable links in the transmission contributes to improve the
energy efficiency of the system under  delay constraints, especially
for Rayleigh flat fading and Nakagami block fading channel.
\appendix
\section{Derivation of $P_{0}$ and $d_0$}
\label{sec:appendix} Substituting \eqref{eq:pl} and \eqref{eq:gamma}
into \eqref{eq:dPt_EDRb} and \eqref{eq:dd_EDRb}, we obtain:

\begin{equation}
\label{eq:proof_P0_1} \left\{\begin{split}
&\frac{d^\alpha K_1 p_l(\gamma)-K_2(E_c+K_1 P_t)p_l^\prime(\gamma)}{d^{(\alpha+1)}p_l^2(\gamma)}&=0\\
&\frac{(E_c+K_1 P_t)\cdot(-d^\alpha p_l(\gamma)+K_2 P_t \alpha
p_l^\prime(\gamma))}{d^{(\alpha+2)}p_l^2(\gamma)} &=0
\end{split}\right.
\end{equation}
where $p_l^\prime(\gamma)$ is the derivative of the function
$p_l(\gamma)$. Because $E_c+K_1 P_t$ is greater than $0$,
simplifying the equation set~\eqref{eq:proof_P0_1} yields:
\begin{equation}
\label{eq:proof_P0_2} \left\{\begin{split}
&d^\alpha K_1 p_l(\gamma)-K_2(E_c+K_1 P_t)p_l^\prime(\gamma)&=0\\
&-d^\alpha p_l(\gamma)+K_2 P_t \alpha p_l^\prime(\gamma) &=0
\end{split}\right.
\end{equation}
Solving the equation set~\eqref{eq:proof_P0_2} and substituting
$\gamma$ with \eqref{eq:gamma}, we have:
\begin{equation} P_0 = \frac{E_c}{K_1(\alpha -1)} \notag
\end{equation}
and
\begin{equation}
d_0 ^\alpha = \frac{p_l^\prime (K_2 P_0 d_0^{-\alpha}) K_2
P_0}{p_l(K_2 P_0 d_0 ^{-\alpha}) }. \notag
\end{equation}

\section{Derivation of the optimal transmission range in AWGN
channel} \label{sec:d0g_der_awgn}According to \eqref{eq:pl}, the
link model in AWGN channel is given by:
\begin{equation}
\label{eq:pl_gaussian} p_l = (1 - BER(\gamma))^{N_b},
\end{equation}
where $BER(\gamma)$ is the Bit Error Rate (BER). A closed form of
BER is described in~\cite{com:Goldsmith05} for coherent detection in
AWGN channel:
\begin{equation}
\label{eq:ber_awgn_general} BER(\gamma_b) = \alpha_m
{\textrm{Q}}(\sqrt{\beta_m \gamma_b}),
\end{equation}
with the Q function, $\textrm{Q}(x)=
\int_{x}^{\infty}\frac{1}{\sqrt{\pi}} e^{-u^2/2} du$, where
$\alpha_m$ and $\beta_m$ rely on the modulation type and order,
e.g., for Multiple Quadrature Amplitude Modulation (MQAM)
$\alpha_m=4(1-1/\sqrt{M})/\log_2(M)$ and $\beta_m=3\log_2 (M)/(M-1)$
. For Binary Phase Shift Keying (BPSK), $\alpha_m=1$ and $\beta_m =
2$.

The closed form expression of $d_0$ can not be obtained using the
exact $BER(\gamma)$. A simplified tight approximation of
$BER(\gamma)$ is obtained when $\beta_m\cdot \gamma_b\geq2$ by using
the method proposed in~\cite{math:Ermolova04}:
\begin{equation}
\label{eq:ber_awgn} BER_g(\gamma_b)\approx 0.1826\alpha_m\cdot
\exp(-0.5415\beta_m\gamma_b) \ \ \ \text if\ \beta_m\cdot
\gamma_b\geq2,
\end{equation}
where $\exp(\cdot)$ represents the exponential function.
Fig.~\ref{fig:BER_awgn_approx} shows the relation between the
approximation and the exact values of the BER.
\begin{figure}[!t]
\centering
\includegraphics[width=0.8\textwidth]{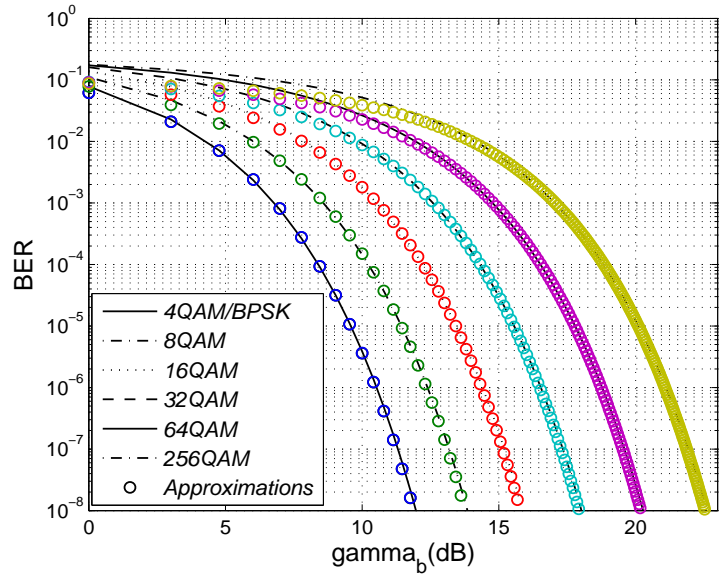}
\caption{BER approximations for MQAM} \label{fig:BER_awgn_approx}
\end{figure}

Therefore, the optimal transmission range $d_{0g}$ is obtained by
substituting~\eqref{eq:ber_awgn} and~\eqref{eq:P0}
into~\eqref{eq:dd_EDRb}:
\begin{equation}
d_{0g} = \left(\frac{-0.5415\beta_m K_2 N_b E_c \alpha}{K_1(\alpha
-1)(1+\alpha N_b \text
W_{-1}\left[\frac{-e^{-\frac{1}{N_b\cdot\alpha}}}{0.1826\alpha_m N_b
\alpha}\right])}\right)^{\frac{1}{\alpha}},
\end{equation}
where $\text W_{-1}[\cdot]$ is the branch satisfying $\text W(x)<-1$
of the Lambert W function~\cite{math:Corless96}.

\section{Derivation of the optimal transmission range in Rayleigh flat
fading channel} \label{sec:d0f_der_fading} There is a general
expression for the BER in Rayleigh flat fading channel in case of
$\bar\gamma\geq5$ in~\cite{com:Goldsmith05}:
\begin{equation}
\label{eq:berf} BER_{f}(\bar\gamma) \approx \frac{\alpha_m}{2\beta_m
\bar\gamma}
\end{equation}
where $\alpha_m$ and $\beta_m$ are the same as those
in~\eqref{eq:ber_awgn}.

Substituting~\eqref{eq:berf} and~\eqref{eq:gamma}
into~\eqref{eq:EDRb}, we have:
\begin{equation}
\label{eq:EDRb_fading} \overline{EDRb_f}(d,P_t)=
\frac{E_{c}+K_1\cdot P_{t}}{d\left(1-\frac{\alpha_m}{2\beta_m
K_2\cdot P_t \cdot d^{-\alpha}}\right)^{N_b}}
\end{equation}

Substituting~\eqref{eq:EDRb_fading} and~\eqref{eq:P0}
into~\eqref{eq:dPt_EDRb}, the optimal transmission range $ d_{0f}$
in Rayleigh flat fading channel is obtained:
\begin{equation}
d_{0f}=\left(\frac{2\beta_m E_c\cdot{K_2}}{(\alpha -1)
\cdot{K_1}\alpha_m (\alpha N_b+1)}\right)^{\frac{1}{\alpha}}
\end{equation}

\section{Derivation of the optimal transmission range in Nakagami block
fading channel} \label{sec:d0b_der_block} The exact link model in
Nakagami block fading channel is~\cite{con:Gorce07}:
\begin{equation}
p_l(\bar{\gamma}) = \int_{\gamma=0}^{\infty} (1-BER(\gamma))^{N_b}
p(\gamma | \bar\gamma)d\gamma, \label{eq:pl_block}\end{equation}
where
\begin{equation}
p(\gamma | \bar\gamma) = \frac{m^m
\gamma^{m-1}}{\bar\gamma\Gamma(m)}\exp\left(-\frac{m\gamma}{\bar\gamma}\right),
\end{equation}
$BER(\gamma)$ refers to~\eqref{eq:ber_awgn}.

When $m =1$ (Rayleigh block fading) and $\alpha_m =1$, the
approximation of \eqref{eq:pl_block} is found:
\begin{equation}
\label{eq:pl_block_approx} pl(\overline{\gamma})=
\exp\left(\frac{-4.25\log_{10}(Nb)+2.2}{\beta_m
\overline{\gamma}}\right).
\end{equation}
\begin{figure}[!t]
\centering
\includegraphics[width=0.8\textwidth]{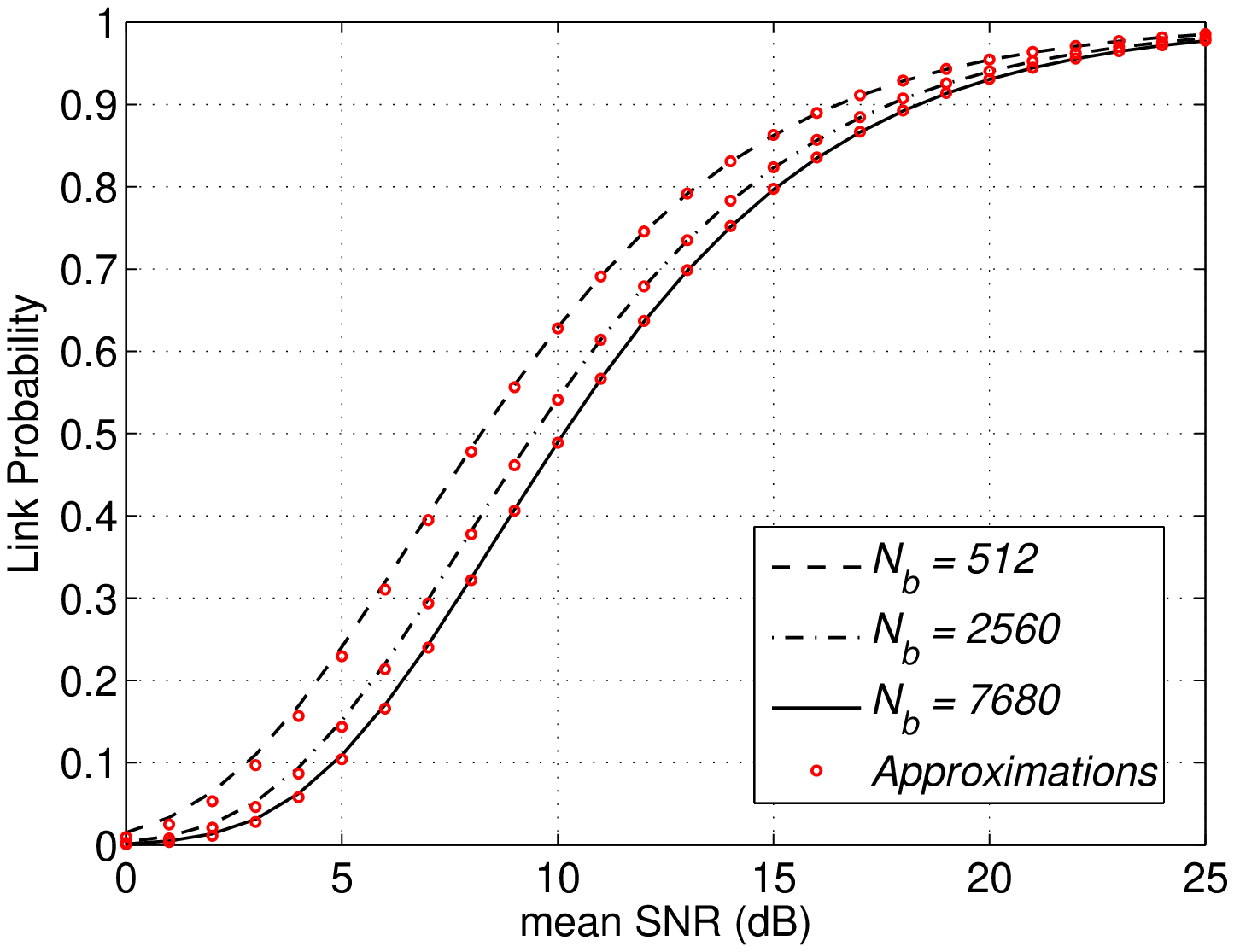}
\caption{Approximations of link probability for Rayleigh block
fading} \label{fig:approx_link_block}
\end{figure}
Fig.~\ref{fig:approx_link_block} shows the approximations for
different values of $N_b$.

Substituting~\eqref{eq:pl_block_approx} and \eqref{eq:P0} into
\eqref{eq:dPt_EDRb} yields the optimal transmission range $d_{0b}$:
\begin{equation}
d_{0b}=\left(\frac{ \beta_m K_2 E_c
}{K_1(\alpha^2-\alpha)(4.25\log_{10}(N_b)-2.2)} \right) ^{1/\alpha}.
\end{equation}
\newpage
\tableofcontents
\newpage
\bibliographystyle{ieeetr}
\bibliography{mybib,books}
\end{document}